\newtheorem{theorem}{Theorem}
\newtheorem{remark}{Remark}
\newtheorem{lemma}{Lemma}
\newtheorem{proposition}{Proposition}
\def\rea{\mathbb{R}}
\begin{document}
\title{Adaptive Motion Control of Parallel Robots
with Kinematic and Dynamic Uncertainties}
\author{
M.~Reza~J.~Harandi, S.~A.~Khalilpour,
  Hamid.~D.~Taghirad,~\IEEEmembership{Senior Member,~IEEE,} and Jose~Guadalupe~Romero
  \thanks{M.~Reza~J.~Harandi, S.~A.~Khalilpour and
    Hamid.~D.~Taghirad are with \textbf{A}dvanced
    \textbf{R}obotics and \textbf{A}utomated \textbf{S}ystems (ARAS),
    Faculty of Electrical Engineering,
    K. N. Toosi University of Technology,
    P.O. Box 16315-1355, Tehran, lran.
    (email: \{Jafari, khalilpour\}@email.kntu.ac.ir and
    taghirad@kntu.ac.ir)}
\thanks{Jose~Guadalupe~Romero is with the Departamento Acad\'{e}mico de Sistemas Digitales, ITAM, R\'io Hondo 1, Progreso Tizap\'an, 01080, Ciudad de M\'exico, M\'{e}xico. (email: jose.romerovelazquez@itam.mx)}
}
 \maketitle
\begin{abstract}
One of the most challenging issues in adaptive control of robot
manipulators with kinematic uncertainties is requirement of the
inverse of Jacobian matrix in regressor form. This requirement  is
inevitable in the case of the control of parallel robots, whose
dynamic equations are written directly in the task space. In this
paper, an adaptive controller is designed for parallel robots based
on representation of Jacobian matrix in regressor form, such that
asymptotic trajectory tracking is ensured. The main idea is separation of determinant and adjugate of Jacobian matrix and then organize new regressor forms.
 Simulation and experimental results on a 2--DOF
R\underline{P}R and 3--DOF redundant cable driven robot, verify
promising performance of the proposed methods.
%Compared to that of serial robots, less attention is paid to the
%subject of the controller design of parallel robots in the presence
%of kinematic and dynamic uncertainties. In this paper, in addition
%to a general framework for adaptive motion control of parallel
%robots, a  novel representation for the Jacobian matrix in regressor
%form is proposed. By this means, the proposed methods, which are
%based on dynamic feedback as well as passivity method, ensure global
%asymptotic trajectory tracking by employing suitable adaptation
%laws. The projection algorithm is also incorporated to enhance
%transient response of the robot. Experimental results on a 3--DOF
%fully parallel and redundant cable driven robots verify promising
%performance of the proposed methods in practice.
\end{abstract}
\begin{IEEEkeywords}
Parallel robot, kinematic and dynamic uncertainty, trajectory
tracking, adaptive control.
\end{IEEEkeywords}
%\vspace{1cm}
\section{Introduction}
Uncertainties  in dynamic and kinematic parameters are inseparable
part of robotic systems. To design effective controllers in presence
of uncertainty, several methods are reported in the literature. One
of the powerful methods is adaptive
control~\cite{ioannou2012robust}. Adaptive controllers are developed
to dispel dynamic uncertainties in both
serial~\cite{ortega1989adaptive} and parallel robots~\cite{
xia2018modeling}. The main idea in this method is to express dynamic
formulation in regressor form, and furthermore, to derive an
adaptation law for unknown parameters based on a suitable Lyapunov
analysis~\cite{slotine1987adaptive}. In this regard, the first
Jacobian adaptation algorithm for serial robots was presented
in~\cite{cheah2004approximate}, where the velocity equations of the
robot was expressed in regressor form with respect to unknown
kinematic parameters. By using Lyapunov direct method, it is shown
that task space variables track the desired trajectory, whereas
parameters estimations may not necessarily converge to their real
values \cite{cheah2006adaptive}. Note that in this work it is
assumed that an equation containing inverse of Jacobian matrix can
be expressed in regressor form. Wang in~\cite{wang2016adaptive}
resolve this problem and proposed a new adaptation law which
improved the performance of the closed-loop system. However, these
works are focused on serial robots, and less attention has been paid
to the control of parallel robots with Jacobian and kinematic
uncertainties.

Parallel robots are closed--loop mechanisms in which the moving
platform is linked to the base by several independent kinematic
chains~\cite{taghirad2013parallel}. The unique characteristics of
parallel robots in terms of their speed and rigidity make them
suitable to a variety of applications such as flight simulators and
very fast pick and place manipulators \cite{harandi2017motion}.
Cable driven parallel robots are a prominent class of this robots
where the links are formed by cables driven by
actuators~\cite{oh2005reference}. Sincedynamics formulation of these
robots are usually written in task
space~\cite{taghirad2013parallel}, exact values of dynamic
parameters and Jacobian matrix is required to achieve a precise
trajectory tracking. This condition may not be satisfied in most
cases especially for deployable cable driven robots, where a
calibrated model is usually unavailable~\cite{khalilpour2019robust}.

Although calibration methods are well developed to reduce kinematic
uncertainties~\cite{chen2018complete}, they usually does not
overcome Jacobian uncertainties and are not applicable for special
cases such as deployable cable driven robots. Another crucial issue
in large scale cable driven robot is sagging of
cables~\cite{meunier2009control}. In this situation, the kinematic
and Jacobian matrix are changed based on position of end-effector,
and therefore, a control strategy to adapt kinematic and Jacobian is
strictly required. Authors in \cite{babaghasabha2015adaptive,
babaghasabha2015robust} have taken two approaches to tackle this
problem for a specified robot. In~\cite{babaghasabha2015adaptive},
an adaptive controller is proposed, and it is assumed that the
adapted parameters converge to their physical values. This
assumption is not necessarily fulfilled in practice since there is
no theoretical guarantee for such convergence.
In~\cite{babaghasabha2015robust} an adaptive robust controller is
proposed, in which the bounds of dynamic and kinematic estimation
errors are considered to be constant but unknown and at last an
ultimate bound for tracking error is derived. However, since these
bounds are state-dependent, this assumption may not be easily
fulfilled.

In this paper an adaptive controller based on Slotine
and Li method \cite{slotine1987adaptive}, is developed for parallel
manipulators with kinematic and dynamic uncertainties. The proposed
method works well for both fully and redundantly actuated robots.
Invoking the researches in the field of serial robots, the main
contribution of this paper is based on a novel representation of
Jacobian matrix of the robot in a general regression form, i.e.
instead of expressing velocity terms in regressor form, the Jacobian
matrix is represented in regressor form which clearly result in a
matrix of unknown values. In order to rectify expression of the
inverse of Jacobian matrix in regressor form, which is a necessary
part of control law and  it is also a stumbling barrier in all the
previous works on serial robots, we separate adjugate and
determinant of this matrix to form new regressors. Finally, based on
passivity method, trajectory tracking is analyzed using direct
Lyapunov method. Note that this is, to the best of the authors'
knowledge, not fully addressed before in the field of parallel
robots with detailed analysis.

\textbf{Notation}: For any matrix $A\in \mathbb{R}^{n\times m}$,
$A_i$ denotes $i$-th column, $\prescript{}{j}{A}$ denotes $j$-th row
and $A_{i,j}$ denotes $(i,j)$-th element of $A$. $A^*$ and
$A^\dagger$ represent adjugate and right pseudo-inverse of $A$,
respectively, while $\hat{A}$ represents estimated value of $A$ and
$\tilde{A}=\hat{A}-A$. Unless indicated otherwise, all vectors in
the paper are considered as column vectors.

\section{Kinematics and Dynamics Analysis}\label{s1}
The dynamic model of a parallel robot with $n$ degrees of freedom
and $m$ actuators with negligible  dissipation forces may be written
in the task space as follows~\cite{khalilpour2019robust}:
\begin{equation}
\label{1} M(X)\ddot{X}+C(X,\dot{X})\dot{X}+G(X)=F=J^T(X)\tau,
\end{equation}
where $X, \dot X \in \mathbb{R}^n$ denotes the generalized
coordinate vector representing the position and orientation of the
end--effector and their velocities, respectively, $\tau\in
\mathbb{R}^m$ denotes the applied torque to the robot, $M(X)\in
\mathbb{R}^{n\times n}$ is the inertia matrix, $C(X,\dot{X})\in
\mathbb{R}^{n\times n}$ denotes the Coriolis and centrifugal matrix,
$G(X)\in \mathbb{R}^n$ is the vector of gravity terms, $J(X)\in
\mathbb{R}^{m\times n}$ denotes the Jacobian matrix of the robot.
Some important properties of the robot dynamic formulation~(\ref{1})
from~\cite[Sec. 5.5.4]{taghirad2013parallel} are as follows.
\begin{itemize}
\item[\textbf{P1}:] The inertia matrix $M(X)$ is symmetric and positive definite for all $X$.
\item[\textbf{P2}:] The matrix $\dot M(X)-2C(X,\dot{X})$ is
skew symmetric.
\item[\textbf{P3}:] The dynamic model is linear with
respect to a set of dynamical parameters and may be represented in a
linear regression form:
\end{itemize}
\begin{equation}
\label{2}
M(X)\ddot{X}+C(X,\dot{X})\dot{X}+G(X)=Y_m(\ddot{X},\dot{X},X)\theta_m,
\end{equation}
where, $Y_m(\ddot{X},\dot{X},X)$ denotes the regressor matrix and
$\theta_m$ denotes the dynamic parameters vector.

The task space wrench $F$ is related to joint space force vector
$\tau$ by Jacobian transpose:
\begin{equation}
\label{3}
F=J^T(X)\tau.
\end{equation}

It was shown in~\cite{cheah2004approximate} that for serial robots,
the Jacobian matrix may be expressed in regressor form as:
\begin{equation}
\label{4} J(q)\dot{q} =Y_k(q,\dot{q})\theta_k,
\end{equation}
where $\theta_k$ denotes unknown kinematic parameters in Jacobian matrix.
This expression may be used to represent each element of the Jacobian
matrix as a linear regression form of kinematic parameters as:
\begin{equation}
\label{5}
J_{i,j}^T(q)=Y_{k_i}\theta_{k_j}.
\end{equation}
%where $Y_{k_i}\in $
Thus, one may represent $J^T(q)$ as follows:
\begin{equation}
\label{6} J^T(q)=\displaystyle\sum_{i=1}^{n}
\displaystyle\sum_{j=1}^{m} Y_{k_i}\theta_{k_j}\prescript{i}{}\Psi^j,
\end{equation}
where all elements of $\prescript{i}{}\Psi^j \in \mathbb{R}^{n\times
m}$ are zero except  $(i,j)$-th element which is equal to one.

In parallel robots with actuated revolute joints, Jacobian matrix is
expressible in the form of (\ref{6}). However, Jacobian matrix of
actuated prismatic joints including cable driven robots may be
represented as,~\cite[Ch.4]{taghirad2013parallel}:
\begin{equation}
\label{3a}
J(X)=
\begin{bmatrix}
\hat{\lambda}_1^T & (\prescript{b}{}{R}_p a_1\times\hat{\lambda}_1)^T  \\
\vdots & \vdots  \\
\hat{\lambda}_m^T & (\prescript{b}{}{R}_p a_m\times\hat{\lambda}_m)^T
\end{bmatrix}
\end{equation}
where $\hat{\lambda}_i$ denotes unit vector in opposite side of
link's direction, $a_i$ denotes the attachment points of the links
to the end-effector represented in moving frame and
$\prescript{b}{}{R}_p$ denotes the rotation matrix. On the contrary,
it is not straight forward for these manipulators to express $J(X)$
in form of (\ref{6}) due to fractional elements of the matrix. To
overcome this problem, $J^T(X)$ is expressed in the following form:
 \begin{align}
 &\resizebox{.99\hsize}{!}{$J^T=
%\begin{bmatrix}
%\hat{\lambda}_1 & \dots & \hat{\lambda}_m \\
%(\prescript{b}{}{R}_p a_1\times\hat{\lambda}_1)  & \dots &
%(\prescript{b}{}{R}_p a_m\times\hat{\lambda}_m)
%\end{bmatrix}\nonumber
%\\&=
\begin{bmatrix}
{\lambda}_1 & \dots & {\lambda}_m \\
(\prescript{b}{}{R}_p a_1\times{\lambda}_1)  & \dots &
(\prescript{b}{}{R}_p a_m\times{\lambda}_m)
\end{bmatrix}
\begin{bmatrix}
l_1 & \dots & 0 \\
\vdots & \ddots & \vdots \\
0 & \dots & l_m
\end{bmatrix}^{-1}$}\nonumber\\&
\triangleq J^T_{new}(X)L^{-1},
 \label{6a}
\end{align}
where $\lambda_i=l_i\hat{\lambda_i}$ and $l_i$ as the
length of $i$-th link. Through this transformation, it is possible
to define $J^T_{new}(X)\in \mathbb{R}^{n\times m}$ in the regressor form of
(\ref{6}). Invoking (\ref{6}), let us write $J^T_{new}(X)$ in the following compact form
\begin{equation}
\label{7} J_{new}^T(X)=Y(X)\Theta,
\end{equation}
in which $Y(X)\in \mathbb{R}^{n\times l}$ and $\Theta\in \mathbb{R}^{l\times m}$. We can notice that it is possible to show that this representation is general and does not assign merely to Jacobian matrix in the form (\ref{3a}).

%\begin{remark} \em
%The following results are basically developed for parallel robots. However, the results are applicable to any  mechanical system with the following dynamic equation
%$$ M(q)\ddot{q}+C(q,\dot{q})\dot{q}+G(q)=\mathcal{G}(q)\tau$$
%with input coupling matrix $\mathcal{G}(q)$ shall satisfy (\ref{7}).
%\end{remark}

\section{Adaptive Jacobian Controller}\label{s2}
In this section an adaptive controller based on Slotine and Li method is proposed for
a parallel manipulator with uncertain kinematics and dynamics. It is
assumed that position and velocity of end-effector, as well as the
length of links for the robots are available for feedback and
derivation of Jacobian matrix in the form of (\ref{3a}). In the
proposed controller, trajectory tracking is guaranteed by
combination of Slotine and Li controller, and adaptation law for the
unknown parameters.

Let us define $S$ as~\cite{
babaghasabha2015robust,slotine1987adaptive}:%slotine1987adaptive
\begin{equation}
\label{8} S=\dot{\tilde{X}}+\Gamma\tilde{X}=\dot{X}-\dot{X}_r,
\end{equation}
with
\begin{equation}
\label{9} \tilde{X}=X-X_d, \qquad
\dot{X}_r=\dot{X}_d-\Gamma\tilde{X},
\end{equation}
where $X_d\in \mathcal{C}^2$ denotes the desired trajectory,
$X_r=X_d-\Gamma\int_{0}^{t}\tilde{X} dt$ denotes virtual reference
trajectory and $\Gamma$ is a constant positive definite matrix. If
all the kinematics and dynamics parameters are known, the following
control law may be directly used for a suitable performance
requirement
\begin{equation}
\label{10} \tau=LJ_{new}^\dagger
\big(M\ddot{X}_r+C\dot{X}_r+G-KS\big),
\end{equation}
where, $K$ is constant symmetric positive definite matrix, and
$J_{new}^\dagger$ denotes the right pseudo-inverse of
$J^T_{new}(X)$. In the case of fully actuated robots,
$J_{new}^\dagger$ is replaced by $J_{new}^{-T}$. Note that control
law (\ref{10}) is related to Jacobian matrix in the form (\ref{6a}),
while for actuated revolute joint, the control law is
    \begin{equation*}
 \tau=J^\dagger
    \big(M\ddot{X}_r+C\dot{X}_r+G-KS\big).
    \end{equation*}
In the sequel, we continue with the notation (\ref{10}). Let us
write $J_{new}^\dagger=\frac{R}{T}$, for the case of redundantly
actuated robot, these matrices are defined as
\begin{equation}
\label{11} R=J_{new}(J_{new}^TJ_{new})^* \in \rea^{m \times n}, \hspace{4mm} T=\det (J_{new}^TJ_{new}) \in \rea,
\end{equation}
and for the case of fully parallel robots, i.e. the robots with number of actuators  equal to degrees
of freedom,
 \begin{equation}
\label{12} R=(J_{new}^T)^* \in \rea^{n\times n}, \hspace{3mm} T=\det (J_{new}^T) \in \rea
\end{equation}
where $(\cdot)^*$ denotes the adjugate matrix. Due to the
uncertainties in parameters, we have to use the estimated values in
the control law
\begin{equation}
\label{13} \tau=L\frac{\hat{R}}{\hat{T}}
\big(\hat{M}\ddot{X}_r+\hat{C}\dot{X}_r+\hat{G}-KS\big),
\end{equation}
where, $\hat{(\,\cdot\,)}$ denotes the estimated value. Invoking
\textbf{P3} and this fact that adjugate matrix is linear with
respect to the parameters, we may express
 \begin{equation}
\label{14}
\begin{array}{c}
\hat{R}\big(\hat{M}\ddot{X}_r+\hat{C}\dot{X}_r+\hat{G}-
KS\big)=Y_a(X,\dot{X},\dot{X}_r,\ddot{X}_r)\hat{\theta}_a,
\end{array}
\end{equation}
where $\theta_a\in \mathbb{R}^r$ is constructed by concatenation of
kinematics and dynamics parameters. Using the proposed control law,
the closed-loop dynamics may be written as:
 \begin{equation}
\label{15} M(X)\ddot{X}+C(X,\dot{X})\dot{X}+G(X)=
J_{new}^T\frac{Y_a\hat{\theta}_a}{\hat{T}}.
\end{equation}
Adding $$-J_{new}^T\frac{Y_a\theta_a}{\hat{T}}=
-J_{new}^T\frac{T}{\hat{T}}\frac{Y_a\theta_a}{T}=
\frac{T}{\hat{T}}\big(M\ddot{X}_r+C\dot{X}_r+G-KS\big)$$ to both
sides of (\ref{15}), the following equation is obtained:
 \begin{align}
M(X)\ddot{X}+C(X,\dot{X})\dot{X}+G(X)-
\frac{T}{\hat{T}}\big(M\ddot{X}_r&\nonumber\\+C\dot{X}_r
+G-KS\big)=J_{new}^T(X)\frac{Y_a\tilde{\theta}_a}{\hat{T}},&
\label{16}
\end{align}
where $\tilde{\theta}_a=\hat{\theta}_a-\theta_a$ denotes estimation error.
% Note that to simplify
%the left hand side of this equation, we have used
%$(\cdot)(\cdot)^*=\det(\cdot)I$, where $I$ denotes the identity
%matrix.
Determinant is linear with respect to the elements of the matrix,
thus we may express $T$ as a linear regression $T=Y_b(X)\theta_b$
where $\theta_b\in\mathbb{R}^k$ are unknown parameters in
determinant. On the other hand, considering \textbf{P3}, one may
reach to the following equation:
\begin{equation}
\label{17} M\ddot{X}_r+C\dot{X}_r+G-KS=
Y_c(X,\dot{X},\dot{X}_r,\ddot{X}_r)\theta_c,
\end{equation}
where $\theta_c\in\mathbb{R}^p$ denotes the vector of dynamical
parameters. Using (\ref{17}), left hand side of (\ref{16}) is
rewritten as follows:
  \begin{align}
&M(X)\ddot{X}+C(X,\dot{X})\dot{X}+G(X)-\nonumber\\
&\frac{Y_b(\hat{\theta}_b -\tilde{\theta}_b)}
{\hat{T}}\big(M\ddot{X}_r+C\dot{X}_r +G -KS\big)= \nonumber\\
&M\dot{S}+CS+KS+
\frac{Y_b\tilde{\theta}_b}{\hat{T}}\big(M\ddot{X}_r+
C\dot{X}_r+G-KS\big)= \nonumber\\&
M\dot{S}+CS+KS+\frac{Y_b\tilde{\theta}_b}{Y_b\hat{\theta}_b}Y_c\theta_c.
  \label{18}
\end{align}
Finally, using (\ref{7}), closed-loop equation (\ref{16}) yields
\begin{equation}
 \begin{array}{l}
\label{20} M\dot{S}+CS+KS= \\
Y\hat{\Theta}\frac{Y_a\tilde{\theta}_a}
{Y_b\hat{\theta}_b} -Y\tilde{\Theta}\frac{Y_a\tilde{\theta}_a}
{Y_b\hat{\theta}_b}+\frac{Y_b\tilde{\theta}_b}
{Y_b\hat{\theta}_b}Y_c\tilde{\theta}_c-
\frac{Y_b\tilde{\theta}_b}{Y_b\hat{\theta}_b}Y_c\hat{\theta}_c.
\end{array}
\end{equation}
 \begin{figure}[t]
    \centering
   \includegraphics[scale=.8]{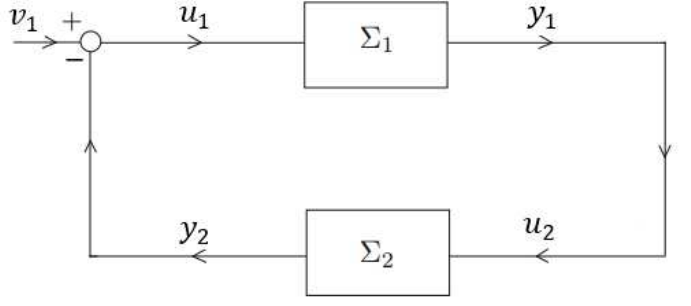}
    \caption{Standard feedback
        configuration of two systems }
    \label{con}
 \end{figure}

%\subsection{Passivity Based Control}\label{s22}
%\textcolor[rgb]{1.00,0.00,0.00}{Passivity Based Control (PBC) is a well defined method to be very powerful to design robust controllers for a large set of physical systems include mechanical, electrical and electromechanical systems which described by Euler–Lagrange equations. The idea is to passivize the closed-loop system respect to a storage function which has a minimum at the desired equilibrium or trajectory~\cite{ortega2002interconnection}. See \cite{ortega2013passivity} for beneficial properties and the large range of applications of this method.}
%Passivity Based Control (PBC) is a well established and powerful method to design robust controllers for a large set
%of physical systems including mechanical, electrical and electromechanical
%systems. The main idea behind this method is to make the closed-loop system passive with respect to a storage function with a minimum at the desired trajectory~\cite{ortega2002interconnection}. For more details and its properties, see~\cite{ortega2013passivity} for a large range of applications.
In the sequel, we design adaptation laws based on passivity method.
First, recall Proposition 4.3.1 of  \cite{van2000l2} on the
connection of two passive systems. The reader is referred to this
reference for detailed proof.
 \begin{proposition}\label{th2}
Consider the standard feedback closed-loop system $\Sigma_1,
\Sigma_2$ which is shown in Fig.~\ref{con}. Assume that $\Sigma_1$
is output strictly passive, i.e. there exists a storage function $H_1$ such that
$\dot{H}_1\leq u_1^Ty_1-y_1\psi(y_1)$
where $y_1\psi(y_1)\geq 0$,
 and $\Sigma_2$ is passive, i.e.  there exists a storage function $H_2$ such that
 $\dot{H}_2\leq u_2^Ty_2.$
  Then the states of
$\Sigma_1$ converge to zero while states of $\Sigma_2$ remain bounded.
 \end{proposition}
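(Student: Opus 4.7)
The plan is to form the composite storage function $V = H_1 + H_2$, exploit the sign-definite cancellation that the standard feedback interconnection provides on the cross terms $u_1^T y_1$ and $u_2^T y_2$, and then invoke an integral/Barbalat-type argument on the residual dissipation.

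First I would write out the interconnection relations from Fig.~\ref{con} in their standard form, namely $u_1 = -y_2$ and $u_2 = y_1$ (plus zero external inputs for the autonomous convergence statement). Differentiating $V$ along trajectories and applying the two dissipation inequalities gives
\begin{equation*}
\dot V \;\leq\; u_1^T y_1 - y_1\psi(y_1) + u_2^T y_2 \;=\; -y_2^T y_1 - y_1\psi(y_1) + y_1^T y_2 \;=\; -y_1\psi(y_1).
\end{equation*}
Since $y_1\psi(y_1)\ge 0$, this shows $\dot V\le 0$, so $V$ is nonincreasing. Because $H_1, H_2$ are storage functions (positive semidefinite and bounded below), this immediately gives boundedness of the states of both $\Sigma_1$ and $\Sigma_2$, which already yields the second claim of the proposition.

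Next I would integrate the dissipation inequality to obtain $\int_0^\infty y_1(t)\psi(y_1(t))\,dt \le V(0) < \infty$. Combined with uniform continuity of $t\mapsto y_1(t)\psi(y_1(t))$ (which follows from boundedness of the states and hence of $\dot y_1$, assuming the usual smoothness of the vector fields and output maps), Barbalat's lemma delivers $y_1\psi(y_1)\to 0$ and hence $y_1\to 0$ as $t\to\infty$.

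The subtle point — and the main obstacle — is passing from $y_1 \to 0$ to convergence of the full state of $\Sigma_1$. This is not automatic from output strict passivity alone; it requires a detectability (zero-state observability) hypothesis on $\Sigma_1$, so that $y_1\equiv 0$ on the $\omega$-limit set forces the state to be zero. In the cited reference this is built into the framework, and for our application to the closed-loop robot dynamics~(\ref{20}) the role of $y_1$ will be played by $S$ (with the inertia matrix providing a quadratic $H_1$), whose vanishing does imply convergence of the tracking error via~(\ref{8})--(\ref{9}). I would therefore invoke Proposition~\ref{th2} as a black box here and defer the verification of the zero-state observability condition to the subsequent stability analysis of the closed-loop system.
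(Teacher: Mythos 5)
The paper does not actually prove this proposition: it cites Proposition 4.3.1 of van der Schaft's book and explicitly refers the reader there, so there is no in-text argument to compare against. Your proposal supplies the standard proof that the reference (and the paper's subsequent Lyapunov argument for Theorem~\ref{th3}) implicitly relies on: take $V=H_1+H_2$, use the interconnection relations $u_1=-y_2$, $u_2=y_1$ to cancel the cross terms, obtain $\dot V\le -y_1\psi(y_1)\le 0$, and then run an integral/Barbalat argument on the residual dissipation. This is correct and is essentially the same route the paper takes later when it writes $V=\frac{1}{2}S^TMS+\frac{1}{2}\tilde{\theta}_F^T\Lambda\tilde{\theta}_F$ and concludes via LaSalle--Yoshizawa. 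Two remarks. First, your observation that output strict passivity alone yields only $y_1\to 0$, and that convergence of the \emph{state} of $\Sigma_1$ requires a zero-state detectability hypothesis, is a genuine and worthwhile point that the proposition as stated glosses over; in the application it is vacuously satisfied because the state of $\Sigma_1$ is $S$ and the output is $y_1=S$ itself, so no separate verification is needed. Second, to pass from $\dot V\le 0$ to boundedness of the states you need the storage functions to be proper (radially unbounded), not merely positive semidefinite and bounded below; this holds in the application since $M$ is uniformly positive definite and $\Lambda>0$, but it is worth stating if you intend the proposition in the generality claimed.
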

In order to use the above proposition, we shall modify right hand
side of (\ref{20}) in such a way that all the terms are represented
by a regressor matrix and an unknown vector. Therefore, in the
following, $Y\tilde{\Theta}\frac{Y_a\tilde{\theta}_a}
{Y_b\hat{\theta}_b}$ and $\frac{Y_b\tilde{\theta}_b}
{Y_b\hat{\theta}_b}Y_c\tilde{\theta}_c$ are changed accordingly. Assume that
  \begin{equation*}
%\label{29.1}
\frac{Y_b\tilde{\theta}_b}
{Y_b\hat{\theta}_b}Y_c\tilde{\theta}_c\triangleq\frac{Y_cY_\mu}{Y_b\hat{\theta}_b}\tilde{\theta}_{\mu}
\end{equation*}
with,
  \begin{equation*}
%\label{29.2}
\begin{array}{c}
\resizebox{.97\hsize}{!}{$Y_\mu=\begin{bmatrix}
Y_b & 0_{1\times k} & \dots & 0_{1\times k} \\
0_{1\times k} & Y_b & 0_{1\times k} & \dots \\
\vdots & & \ddots & \vdots \\
0_{1\times k} & \cdots & 0_{1\times k} & Y_b
\end{bmatrix}_{n\times p\cdot k} ,
\tilde{\theta}_{\mu}=\begin{bmatrix}
\Big(\prescript{}{1}{(\tilde{\theta}_c\tilde{\theta}_b^T)}\Big)^T \\ \vdots \\ \Big(\prescript{}{p}{(\tilde{\theta}_c\tilde{\theta}_b^T)}\Big)^T
\end{bmatrix}$}%_{p\cdot k\times 1}
\end{array}
    \end{equation*}
and $\prescript{}{i}{(\tilde{\theta}_c\tilde{\theta}_b^T)}$ is the $i$-th row
of $\tilde{\theta}_c\tilde{\theta}_b^T$. Furthermore,
\begin{equation*}
   % \label{29.3}
    Y\tilde{\Theta}\frac{Y_a\tilde{\theta}_a}
    {Y_b\hat{\theta}_b}=\frac{YY_\eta}{Y_b\hat{\theta}_b}\tilde{\theta}_\eta
        \end{equation*}
        where,
     \begin{equation*}
      %  \label{29.4}
        \begin{array}{c}
        Y_\eta=\\
        \hspace{-2mm}\resizebox{.97\hsize}{!}{$\begin{bmatrix}
        \prescript{}{1}{(Y_a)}, ..., \prescript{}{m}{(Y_a)} & 0_{1\times m.r}& \dots& 0_{1\times m\cdot r} \\
        0_{1\times m\cdot r} & \prescript{}{1}{(Y_a)}, ..., \prescript{}{m}{(Y_a)} & & 0_{1\times m\cdot r} \\ \vdots & & \ddots & \vdots \\
        0_{1\times m\cdot r} & \dots &  & \prescript{}{1}{(Y_a)}, ..., \prescript{}{m}{(Y_a)}
        \end{bmatrix}_{l\times m\cdot r\cdot l}$}
        \end{array}
\end{equation*}
in which $\prescript{}{i}{(Y_a)}$ is $i$-th row of $Y_a$, and
     \begin{equation*}
   % \label{29.5}
    \begin{array}{c}
    \tilde{\theta}'_\eta=
    \begin{bmatrix}
    \tilde{\Theta}_{1,1}\tilde{\theta}^T_a &  \dots & \tilde{\Theta}_{1,m}\tilde{\theta}^T_a \\
    \vdots & \ddots & \vdots \\
    \tilde{\Theta}_{l,1}\tilde{\theta}^T_a & \dots & \tilde{\Theta}_{l,m}\tilde{\theta}^T_a
    \end{bmatrix}_{l\times m\cdot r},
    \tilde{\theta}_\eta=\begin{bmatrix}
    \prescript{}{1}{(\tilde{\theta}'_\eta)}^T \\ \vdots \\ \prescript{}{l}{(\tilde{\theta}'_\eta)}^T
    \end{bmatrix}_{m\cdot r\cdot l\times 1}
    \end{array}
    \end{equation*}
where $\tilde{\Theta}_{i,j}$ is $(i,j)$-th element of
$\tilde{\Theta}$. Therefore,  (\ref{20}) may be rewritten as
follows:
    \begin{equation}
    \begin{array}{c}
    \label{29.6} M\dot{S}+CS+KS=\\
    \frac{Y\hat{\Theta}Y_a}{Y_b\hat{\theta}_b}\tilde{\theta}_a
    -\frac{YY_\eta}{Y_b\hat{\theta}_b}\tilde{\theta}_\eta+\frac{Y_cY_\mu}{Y_b\hat{\theta}_b}\tilde{\theta}_{\mu}-
    \frac{Y_c\hat{\theta}_cY_b}{Y_b\hat{\theta}_b}\tilde{\theta}_b\triangleq Y_F\tilde{\theta}_F,
    \end{array}
    \end{equation}
    where,
    $$F=\begin{bmatrix}
    \frac{Y\hat{\Theta}Y_a}{Y_b\hat{\theta}_b} & -\frac{YY_\eta}{Y_b\hat{\theta}_b} & \frac{Y_cY_\mu}{Y_b\hat{\theta}_b} & -
    \frac{Y_c\hat{\theta}_cY_b}{Y_b\hat{\theta}_b}
    \end{bmatrix}$$
    $$\tilde{\theta}_F=\begin{bmatrix}
    \tilde{\theta}_a \\ \tilde{\theta}_\eta \\ \tilde{\theta}_{\mu} \\ \tilde{\theta}_b
    \end{bmatrix}  \in \rea^{(r+m\cdot r\cdot l+p\cdot k+k)\times 1}.$$
Equation (\ref{29.6}) may be considered as the system $\Sigma_1$
represented in Proposition \ref{th2}. This system is output strictly
passive with  $H_1=\frac{1}{2}S^TMS$  as  the storage function,
because
        \begin{equation*}
    %\label{29.8}
    \dot{H}\leq -\lambda_{min}\{K\}\|S\|^2+S^Tu_1
        \end{equation*}
with  $u_1=Y_F\tilde{\theta}_F, y_1=u_2=S$ and without external
input (i.e. $v_1=0$). In order to apply Proposition \ref{th2}, an
adaptation law is required to be defined for $\tilde{\theta}_F$ such
that $\Sigma_2$ becomes passive. For this means, the following
dynamic is set for $\tilde{\theta}_F$
    \begin{equation}
            \label{29.9}
        \dot{\hat{\theta}}_F=-\Lambda^{-1}Y^T_FS \hspace{1.5cm} \Lambda> 0
    \end{equation}
which leads to the passivity of $\Sigma_2$, since
    \begin{equation*}
%   \label{29.91}
    H_2=\frac{1}{2}\tilde{\theta}^T_F\Lambda\tilde{\theta}_F \to \dot{H}_2=-S^TY_F\tilde{\theta}_F=u_2^Ty_2.
    \end{equation*}
Note that it is assumed that $\theta_F$ is constant. Let us state
the following theorem on adaptive passivity based control of
parallel robots with kinematics and dynamics uncertainties.
    \begin{theorem}\label{th3}
Consider a parallel robot with dynamic equation (\ref{1}), control law (\ref{13}) and adaptation
law (\ref{29.9}). The tracking error converges to zero in the
presence of uncertainties in kinematics and dynamics
parameters.
    \end{theorem}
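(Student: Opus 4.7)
The plan is to apply Proposition~\ref{th2} to the feedback interconnection already identified: take $\Sigma_1$ as the filtered error dynamics (\ref{29.6}) and $\Sigma_2$ as the parameter update (\ref{29.9}). If both passivity conditions are verified, the proposition gives $S\to 0$ with $\tilde\theta_F$ bounded, and the theorem then follows from the fact that $S=\dot{\tilde X}+\Gamma\tilde X$ is a stable first-order filter driven by $S$.

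First I would verify output-strict passivity of $\Sigma_1$. Differentiating $H_1=\tfrac12 S^TMS$ along (\ref{29.6}) gives
\begin{equation*}
\dot H_1 = S^TM\dot S + \tfrac12 S^T\dot M S = -S^T K S + S^T Y_F\tilde\theta_F + \tfrac12 S^T(\dot M-2C)S,
\end{equation*}
and the skew-symmetry property \textbf{P2} cancels the last term, yielding $\dot H_1 = -S^T K S + S^T Y_F\tilde\theta_F$, i.e. output-strict passivity from $u_1=Y_F\tilde\theta_F$ to $y_1=S$ with dissipation $\psi(S)=KS$. For $\Sigma_2$, choose $H_2=\tfrac12 \tilde\theta_F^T\Lambda\tilde\theta_F$; since $\theta_F$ is constant, $\dot{\tilde\theta}_F=\dot{\hat\theta}_F=-\Lambda^{-1}Y_F^TS$, hence $\dot H_2=-\tilde\theta_F^T Y_F^T S$, which matches passivity with $u_2=S$ and $y_2=-Y_F\tilde\theta_F$ in the sign convention of Fig.~\ref{con}.

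Next I would combine the two: with $V=H_1+H_2$ the cross terms cancel and $\dot V = -S^T K S \le -\lambda_{\min}(K)\,\|S\|^2$. This yields $S,\tilde\theta_F\in L_\infty$, and integrating gives $S\in L_2$. Boundedness of the desired trajectory and of $\tilde\theta_F$ propagates to boundedness of $X,\dot X$, of all regressors, and hence of $\dot S$ through (\ref{29.6}), so Barbalat's lemma concludes $S(t)\to 0$. Finally, because $S=\dot{\tilde X}+\Gamma\tilde X$ with $\Gamma\succ 0$ is an exponentially stable linear filter driven by an $L_2\cap L_\infty$ signal vanishing at infinity, standard arguments give $\tilde X,\dot{\tilde X}\to 0$, proving the claim.

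The main obstacle I expect is not the Lyapunov bookkeeping but the implicit well-posedness of the construction: the matrix $Y_F$ and the control (\ref{13}) both contain the denominator $Y_b\hat\theta_b=\hat T$, the estimated determinant of $J_{new}^TJ_{new}$ (or of $J_{new}^T$). All manipulations from (\ref{15}) onward, the boundedness of $Y_F$ needed for Barbalat, and even the existence of $\tau$ require $|\hat T|$ to stay bounded away from zero. The clean gradient-type update (\ref{29.9}) does not by itself prevent $\hat\theta_b$ from drifting into a singular region; a rigorous argument would either assume an a priori bound $|\hat T|\ge\epsilon>0$ or, more properly, project $\hat\theta_F$ onto a convex set on which this holds, which keeps $H_2$ non-increasing while preserving the cancellation above. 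Modulo this technical point, the proof reduces to the passivity computation sketched in the lines preceding the theorem.
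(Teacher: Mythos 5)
Your proposal is correct and follows essentially the same route as the paper: the passivity interconnection of Proposition~\ref{th2} backed by the Lyapunov function $V=\tfrac12 S^TMS+\tfrac12\tilde\theta_F^T\Lambda\tilde\theta_F$, yielding $\dot V\le -S^TKS$ and hence $S\to 0$ and $\tilde X\to 0$ (the paper invokes LaSalle--Yoshizawa where you use Barbalat, an immaterial difference). Your closing caveat about keeping $\hat T=Y_b\hat\theta_b$ bounded away from zero is well taken and is exactly what the paper defers to Remark~\ref{r1} and the projection algorithm of Lemma~\ref{th4}.
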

\begin{proof}
The proof is obvious with respect to Proposition \ref{th2} and
$\Sigma_1, \Sigma_2$ defined above. However, a Lyapunov  based proof
is also presented here. Consider the following Lyapunov function
candidate:
        \begin{equation}
    \label{29.8}
     V=\frac{1}{2}S^TMS+\frac{1}{2}\tilde{\theta}^T_F\Lambda\tilde{\theta}_F.
        \end{equation}
      then its
 time derivative becomes
\begin{equation*}
 \dot{V}\leq -S^TKS.
    \end{equation*}
Invoking Lasalle-Yoshizawa Theorem \cite[Theorem
8.4]{khalil2002nonlinear},  it is easy to show that $S$ converges to
zero, and hence, the convergence of $\tilde{X}$ is resulted from
(\ref{8}).
\end{proof}

\begin{remark}
\normalfont There may be a number of parameters in multiple unknown
vectors that may not converge to their real values. However, this
does not cause any problem for a suitable trajectory tracking.
\end{remark}
\begin{remark}\label{r1}
    \normalfont
Singularity avoidance in construction and path planing is a
necessary and important requirement in parallel robots
\cite{gosselin2016kinematically}. Here, it is assumed that desired
trajectory is inside its workspace far from singular space of the
robot. By this means, the Jacobian matrix is always full rank, and
therefore, its estimation is plausible. However, projection
algorithm may be employed in order to ensures singularity avoidance
as well as avoiding large variation in parameters and provides a
faster and better transient response. Note that by this means,
positive tension in the case of cable driven robots is ensured.
\end{remark}
In the following lemma, invoking \cite[Theorem
4.4.1]{ioannou2012robust}, a projection algorithm based on gradient
method is proposed.
\begin{lemma}\label{th4}
Consider closed-loop system (\ref{29.6}) with adaptation law
(\ref{29.9}). Assume that it is priori known that ${\theta}_F$ is
absolutely in a compact subspace $\Omega$, i.e. ${\theta}_F \in
\Omega$ where $\Omega$ is defined as $\Omega=\{{\theta}_F |
g({\theta}_F)\leq 0\}$ and $g({\theta}_F)$ is known. The objective
is to keep $\hat{\theta}_F$ in $\Omega$.
If $\hat{\theta}_F$ is on the edge of $\Omega$ i.e.
$\hat{\theta}_F\in \partial \Omega$, and
$\dot{\hat{\theta}}_F^T\nabla g>0$,  the following adaptation law
is chosen
    \begin{equation}
    \label{29.91}
\dot{\hat{\theta}}_F=-\Lambda^{-1}\overline{\nabla
    g}Y_F^T S
    \end{equation}
where $\overline{\nabla g}$ is projection matrix
        \begin{equation}
    \label{29.92}
\overline{\nabla g}=I-\frac{(\nabla g)(\nabla g)^T}{||\nabla g||^2} .
    \end{equation}
This leads to $\hat{\theta}_F$ to remains in $\Omega$.
\end{lemma}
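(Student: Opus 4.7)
The plan is to verify two things: first, that the switched update law keeps $\hat{\theta}_F$ inside $\Omega$ for all time; second, that the Lyapunov analysis from Theorem \ref{th3} still yields $\dot V \le -S^T K S$, so that trajectory tracking is preserved. The key observation is that the projection matrix $\overline{\nabla g}$ in (\ref{29.92}) annihilates the component of the unprojected flow that is normal to $\partial\Omega$, and that the switching condition $\dot{\hat\theta}_F^T\nabla g>0$ is exactly the condition under which the unprojected law would push the estimate outside $\Omega$.

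First I would establish invariance. Whenever $g(\hat\theta_F)<0$, the law (\ref{29.9}) is in force and, by continuity of the flow, $\hat\theta_F$ can only leave $\Omega$ through $\partial\Omega$. On $\partial\Omega$, i.e.\ when $g(\hat\theta_F)=0$, I would distinguish two cases. If the unprojected drive points inward, $(\nabla g)^T(-\Lambda^{-1}Y_F^T S)\le 0$, no projection is triggered and $\hat\theta_F$ re-enters the interior. If instead $\dot{\hat\theta}_F^T\nabla g>0$, the law switches to (\ref{29.91}) and a direct computation using (\ref{29.92}) gives $(\nabla g)^T\overline{\nabla g}=(\nabla g)^T-(\nabla g)^T(\nabla g)(\nabla g)^T/\|\nabla g\|^2=0$, so $\frac{d}{dt}g(\hat\theta_F)=(\nabla g)^T\dot{\hat\theta}_F=0$. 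Hence $\hat\theta_F$ slides along $\partial\Omega$ instead of leaving, which proves that $\Omega$ is positively invariant.

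Next I would recompute $\dot V$ using the Lyapunov candidate (\ref{29.8}). As in the proof of Theorem \ref{th3}, \textbf{P2} and (\ref{29.6}) give $\dot V=-S^T KS+S^T Y_F\tilde\theta_F+\tilde\theta_F^T\Lambda\dot{\hat\theta}_F$. In the interior of $\Omega$ the law (\ref{29.9}) is active and the last two terms cancel, so $\dot V\le -S^T KS$ exactly as before. On $\partial\Omega$ when the projected law (\ref{29.91}) is active, substitution yields
\[
\dot V=-S^T KS+\frac{(S^T Y_F\nabla g)\,((\nabla g)^T\tilde\theta_F)}{\|\nabla g\|^2}.
\]
The switching condition $\dot{\hat\theta}_F^T\nabla g>0$ applied to the \emph{unprojected} drive $-\Lambda^{-1}Y_F^T S$ gives $S^T Y_F\nabla g<0$. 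On the other hand, since $\Omega$ is taken to be convex (which is implicit in the standard projection setup of \cite{ioannou2012robust}) and $\theta_F\in\Omega$, $\hat\theta_F\in\partial\Omega$, convexity of $g$ at the boundary point yields $(\nabla g)^T(\theta_F-\hat\theta_F)\le g(\theta_F)-g(\hat\theta_F)\le 0$, i.e.\ $(\nabla g)^T\tilde\theta_F\ge 0$. The extra term is therefore nonpositive and $\dot V\le -S^T KS$ is preserved.

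With invariance of $\Omega$ and $\dot V\le -S^T KS$ in hand, the same Lasalle--Yoshizawa argument used in the proof of Theorem \ref{th3} gives $S\to 0$ and hence $\tilde X\to 0$ via (\ref{8}), while $\hat\theta_F$ remains in $\Omega$ for all time. The main obstacle I anticipate is the last step above: ensuring the sign of the cross term produced by the projection. It relies on the convexity of $g$ (equivalently, of $\Omega$), a hypothesis that is standard in gradient projection but should be stated explicitly; without it, one can only guarantee invariance but not preservation of the Lyapunov decrease.
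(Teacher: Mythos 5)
Your proof is correct and follows essentially the same route as the paper's: positive invariance via the tangential projection (using $(\nabla g)^T\overline{\nabla g}=0$), then the Lyapunov derivative with the residual term $\tilde{\theta}_F^T\frac{(\nabla g)(\nabla g)^T}{\|\nabla g\|^2}Y_F^TS$, whose sign is controlled by $\tilde{\theta}_F^T\nabla g\ge 0$ together with the switching condition. The only difference is that you make explicit the convexity of $\Omega$ (and, implicitly, that inferring $S^TY_F\nabla g<0$ from $(-\Lambda^{-1}Y_F^TS)^T\nabla g>0$ really uses $\Lambda$ being a scalar matrix), points the paper's proof asserts geometrically without comment.
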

\begin{proof}
If $\hat{\theta}_F$ is inside $\Omega$, adaptation law (\ref{29.9})
is applied, and by this means, it remains in $\Omega$. Assume that
${\hat{\theta}_F}\in \partial \Omega$, hence the aim is to ensure
that $\hat{\theta}_F$ always remains in $\Omega$. For this means,
the direction of $\dot{\hat{\theta}}_F$ should not be directed
toward outside of $\Omega$. In other words, dot product of
$\dot{\hat{\theta}}_F$ and $\nabla g=\partial g/\partial \theta$
shall be non-positive. Therefore, if $(-\Lambda^{-1}Y^T_FS)^T \nabla
g> 0$, $\dot{\hat{\theta}}_F$ should be projected on the direction
tangent to $\partial \Omega$. This is done using projection matrix
(\ref{29.92}) which results in adaptation law (\ref{29.91}). Now
consider Lyapunov candidate (\ref{29.8}) whose time derivative is
given as:
\begin{equation*}
\dot{V}=-S^TKS+S^TY_F\tilde{\theta}_F-\tilde{\theta}_F^T\overline{\nabla g}Y_F^T S.
\end{equation*}
By considering (\ref{29.92}), $\dot{V}$ becomes
\begin{equation*}
\dot{V} \leq -\lambda_{min}\{K\}\|S\|^2+\tilde{\theta}_F^T\frac{(\nabla g)(\nabla g)^T}{||\nabla g||^2}Y_F^T S.
\end{equation*}
Note that $\tilde{\theta}_F^T\nabla g\geq 0$, since direction of
$\tilde{\theta}_F$ and $\nabla g$ are toward outside of $\Omega$.
Therefore, the last term in the above inequality is negative.
%in which the last term  is negative, because
%$\tilde{\theta}_F^T\nabla g\geq 0$ and $S^TY_F^T\nabla g <0$.
\end{proof}
Notice that in most cases, exact derivation of $g(\theta_F )$ is
highly complicated. Hence, the acceptable bound for each element of
unknown vector is considered and the simplest projection function,
namely saturation is used, since it is applicable to any adaptive
control law~\cite{dixon2007adaptive}. In other words, this is
equivalent to define an absolute value function for every elements
of unknown vectors. For example, Assume that $\alpha\in \mathbb{R}$
is an element of an unknown vector and it is known that $k_1\leq
\alpha\leq k_2$. Define $g(\alpha)$ as
    $$g(\alpha)=\left |\alpha-\frac{k_1+k_2}{2}\right |-\frac{k_2-k_1}{2}.$$
Now, one may find $\nabla g=\text{sign}(\alpha-\frac{k_1+k_2}{2})$ which leads to $\overline{\nabla g}=0$ and therefore, $\dot{\hat{\alpha}}=0$ when $\alpha$ is at the edge of $g(\alpha)\leq 0$.
\begin{figure}[t]
    \centering
    \includegraphics[scale=.7]{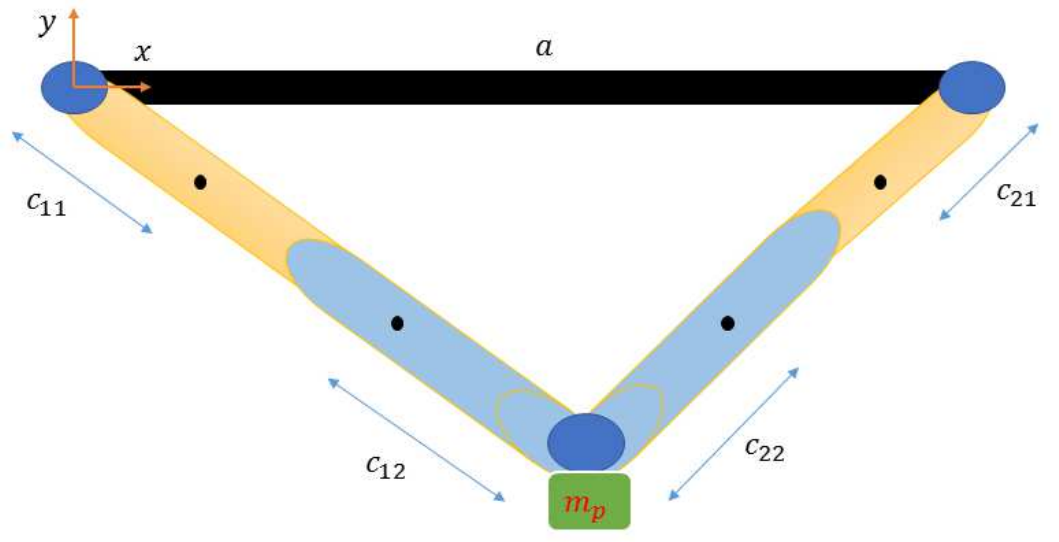}
    \caption{Schematic of the 2--DOF R\underline{P}R parallel robot.  }
    \label{sh0}
\end{figure}
\section{Simulation Results}\label{s2.5}
In this section, simulation results of proposed method on a 2--DOF R\underline{P}R parallel robot is presented. The schematic of this robot is illustrated in Fig.~\ref{sh0}. $X=[x,y]^T$ denotes position of end-effector, $I_{xi}$ is moment of inertial of $i$-th link,
$m_{i1}$ and $m_{i2}$ are the mass of $i$-th cylinder and piston, respectively and $m_p$ denotes the mass of end-effector. The dynamic parameters and Jacobian matrix of the robot are
\begin{align*}
M&=m_pI_2+\displaystyle\sum_{i=1}^{2} m_{i2}\hat{\lambda}_i\hat{\lambda}_i^T-\frac{1}{l_i^2}I_{xi}\hat{\lambda}_{i\times}^2-m_{ce}\hat{\lambda}_{i\times}^2 \\
C&=\displaystyle\sum_{i=1}^{2} -\frac{2}{l_i}m_{co}\dot{l}_i\hat{\lambda}_{ix}^2-\frac{1}{l_i^2}m_{i2}c_{i2}\hat{\lambda}_i\dot{X}^T\hat{\lambda}_{ix}^2\\
G&=\bigg(m_p+\displaystyle\sum_{i=1}^{2}m_{ge}\hat{\lambda}_{ix}^2-m_{i2}\hat{\lambda}_i\hat{\lambda}_i^T\bigg)\begin{bmatrix}
0 \\ g
\end{bmatrix}\\
J&=-\begin{bmatrix}
\frac{x}{l_1} & \frac{y}{l_1} \\
\frac{x-a}{l_2} & \frac{y}{l_2}
\end{bmatrix},
\hat{\lambda}_1=\begin{bmatrix}
\frac{x}{l_1} \\ \frac{y}{l_1}
\end{bmatrix},
\hat{\lambda}_2=\begin{bmatrix}
\frac{x-a}{l_2} \\ \frac{y}{l_2}
\end{bmatrix}
\end{align*}
with
\begin{align*}
l_1^2&=x^2+y^2, \quad l_2^2=(x-a)^2+y^2,\quad\dot{l}_i=J_{i,1}\dot{x}+J_{i,2}\dot{y}\\
m_{ce}&=\displaystyle\sum_{i=1}^{2} \frac{1}{l_i^2}\big(m_{i1}c_{i1}^2+m_{i2}c_{i2}^2\big)\\
m_{co}&=\frac{1}{l_i}m_{i2}c_{i2}-\frac{1}{l_i^2}(I_{xi}+l_i^2m_{ce})\\
m_{ge}&=\frac{1}{l_i}\Big(m_{i1}c_{i1}+m_{i2}(l_i-c_{i2})\Big).
\end{align*}
    \begin{table}[b]
            \caption{Parameters of 2R\underline{P}R robot.}
                \centering
    \large
    \begin{tabular}[b]{|c|c|c|c|c|c|}
        \hline
        & $m_{i1}$ & $m_{i2}$ & $c_{i1}$ & $c_{i2}$ & $I_{xi}$ \\
        \hline
        \text{$i=1$}    &  1& 1 & 0.5 & 0.5 & 0.1 \\
        \hline
            \text{$i=2$}    &  1& 1 & 0.5 & 0.5 & 0.1 \\
            \hline
    \end{tabular}
    \label{t1}
\end{table}
The parameters of the robot are shown in Table~\ref{t1}. The mass of
end-effector is considered equal to 2Kg. All of the regressors are
represented in Appendix.

In order to evaluate performance of proposed method in
Theorem~\ref{th2}, a simulation with adaptive robust controller
proposed in \cite{babaghasabha2015robust} is considered. The
parameters of the robot are perturbed by 25\%. The gains of
controllers are chosen as
\begin{align*}
\Gamma=2I,\qquad K=3I,\qquad \Lambda=5I.
\end{align*}
Simulation results are illustrated in Fig.~\ref{p5}. Configuration
variables of the robot converge to desired values in both methods.
However, the control signal with adaptive robust method has an
undesirable chattering which is not practically acceptable. Note
that as indicated in \cite{babaghasabha2015robust}, it is possible
to avoid chattering with the expense of loosening the asymptotic
stability to UUB tracking error.
\begin{figure}[t!]
    \centering
    \subfigure[Simulation results of the proposed method.]{
        \includegraphics[width=.99\linewidth]{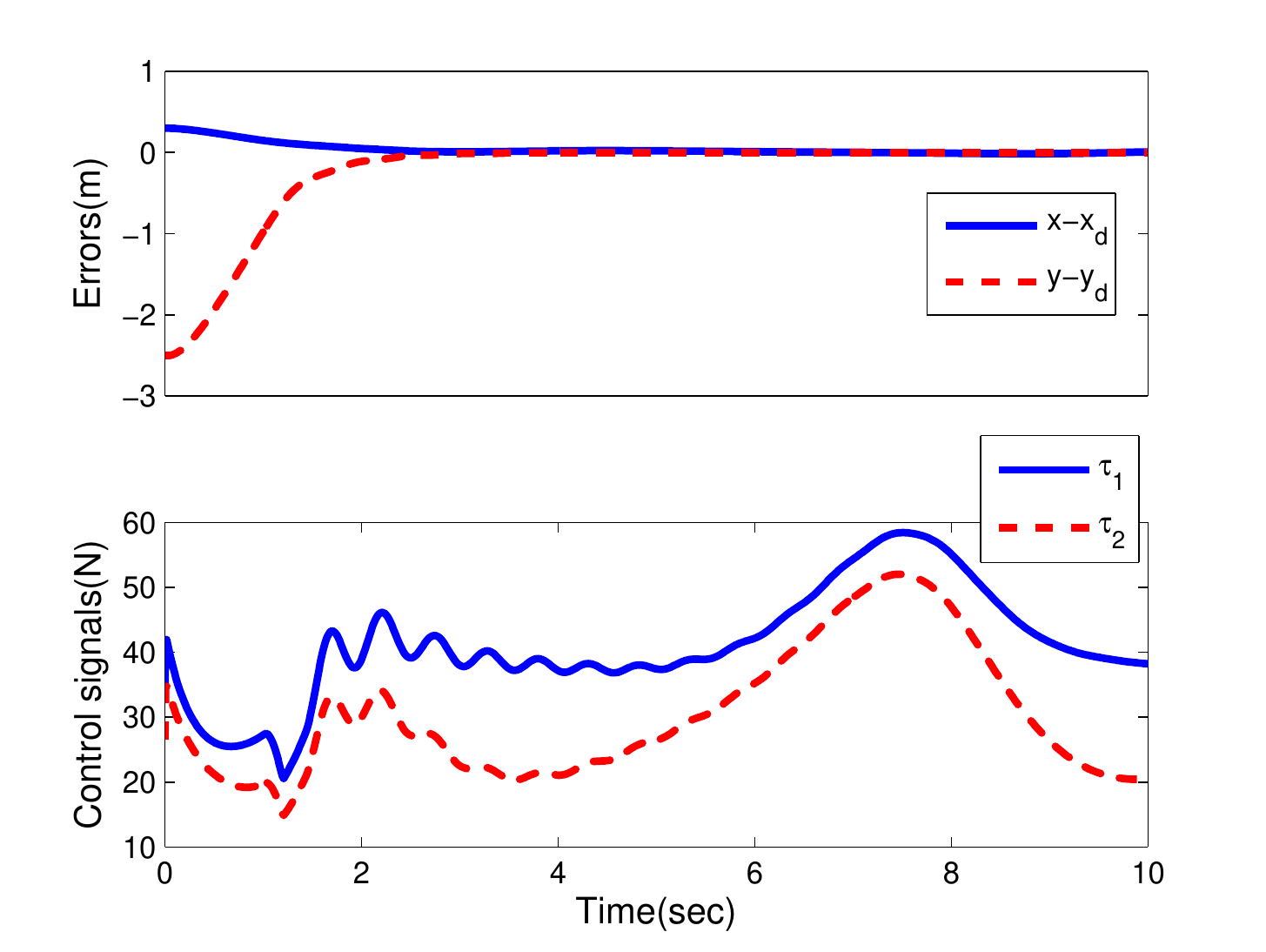}
    }
    \subfigure[Simulation results of adaptive robust
    controller proposed by \cite{babaghasabha2015robust}.]{
        \includegraphics[width=.99\linewidth]{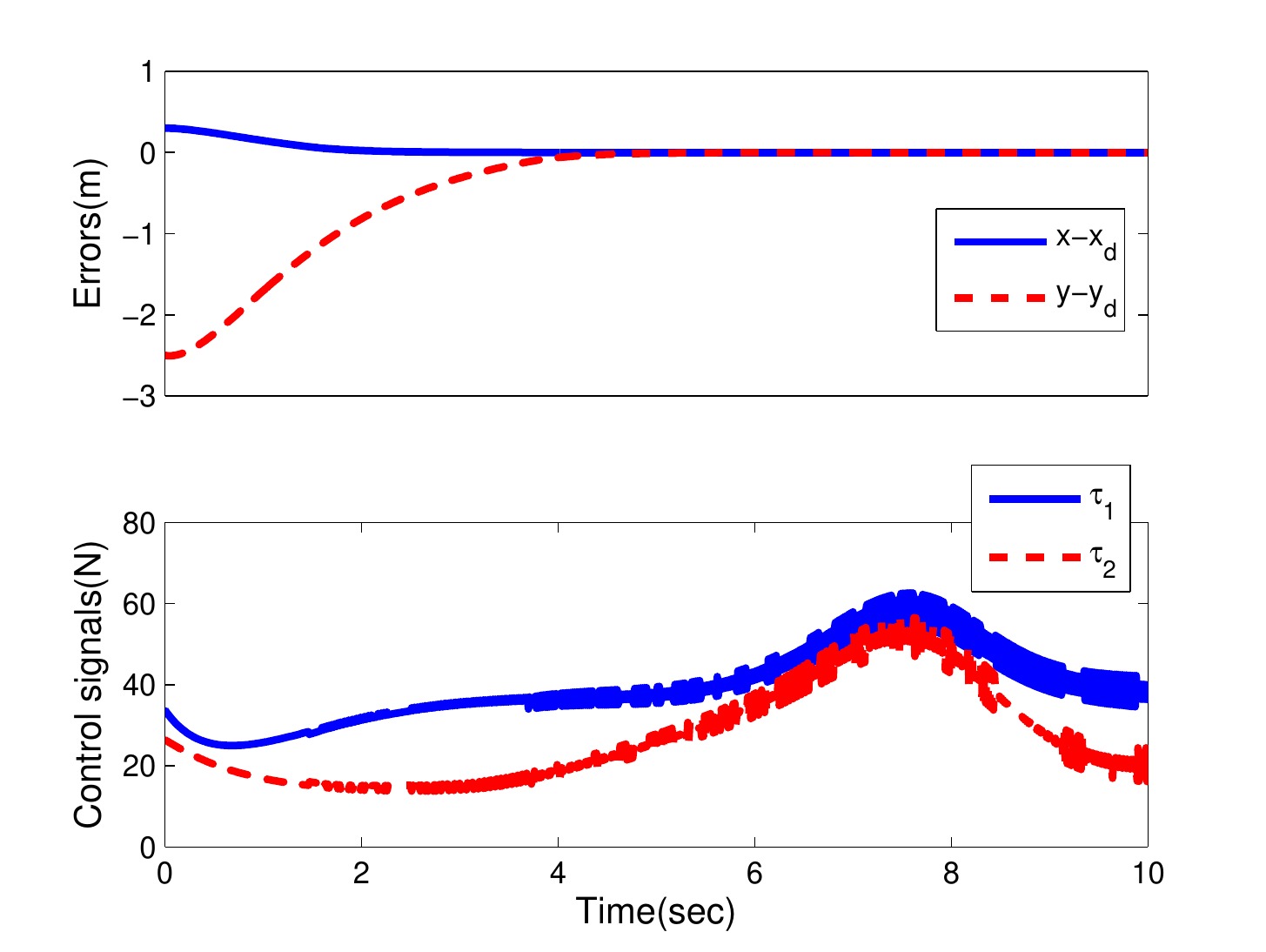}
    }
    \caption{ Tracking error converges to zero with both method while chattering in control law is  destructive inevitable part of adaptive robust controller.}
    \label{p5}
\end{figure}

\section{ Experimental Results}\label{s3}
In order to verify the performance of the proposed method in experiment, a 3--DOF suspended
Cable Driven  Robot (CDR) is considered. The schematic of the
robot is illustrated in Fig.~\ref{sh1}. End-effector is suspended from anchor points by cables which are controlled by motors.
 All of the anchor points are in the same height. The robot has three translational degrees of
freedom with four actuated cables which are driven by motors
through pulleys.
% Since all cables are attached to a single point of the end-effector, the robot is considered to have just three
%translational motions.
\begin{figure}[t]
    \centering
\includegraphics[scale=.1]{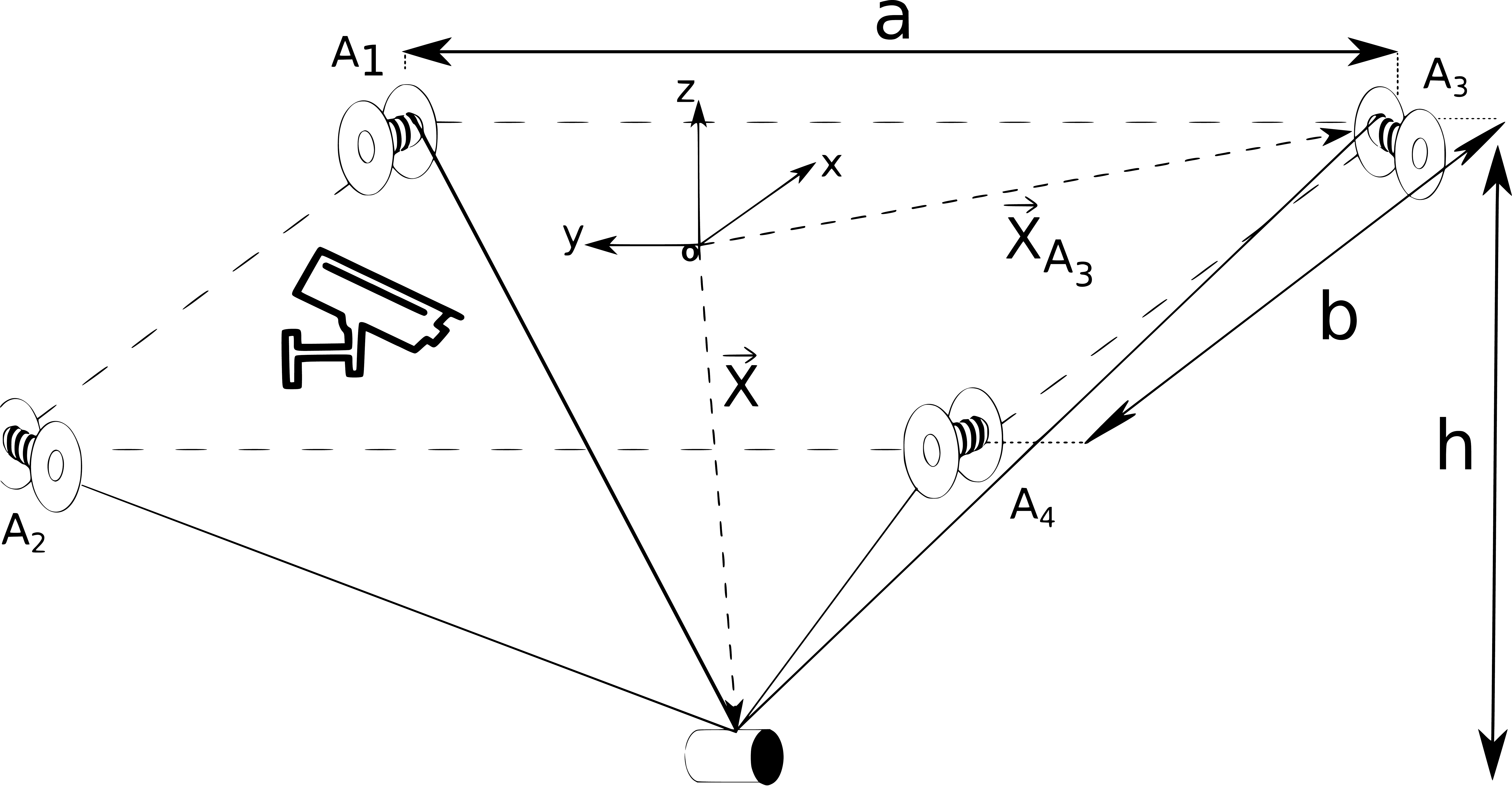}
    \caption{Schematic of the suspended robot with four cables. All of the anchor points are at the same height. Center of coordinates is located in the middle of  $A_1-A_2-A_3-A_4$ rectangle with zero height. }
    \label{sh1}
\end{figure}
Kinematics formulation of this robot is given by
\begin{equation}\label{30-1}
    l_i^2=(x-x_{Ai})^2+(y-y_{Ai})^2+(z-z_{Ai})^2 \quad i=1,...,4
\end{equation}
where, $X=[x,y,z]^T$ is the position of end-effector and   $x_{Ai},y_{Ai},z_{Ai}$ are the uncertain kinematic
parameters that determine the cable anchor points.
% The Jacobian
%matrix is expressed as follows
%\begin{equation}\label{30-2}
%\begin{array}{c}
%J^T=-
%\begin{bmatrix}
%S_{1x} & S_{2x} & S_{3x} & S_{4x}  \\
%S_{1y} & S_{2y} & S_{3y} & S_{4y}  \\
%S_{1z} & S_{2z} & S_{3z} & S_{4z}
%\end{bmatrix}
%\\ S_{ix}=\frac{x-x_{Ai}}{l_i}
%\quad S_{iy}=\frac{y-y_{Ai}}{l_i} \quad S_{iz}=\frac{z-z_{Ai}}{l_i}
%\end{array}
%\end{equation}
%Note that the negative sign in the Jacobian matrix is due to reverse direction of cables.
Dynamic matrices of the robot with the assumption of massless and infinitely stiff
cables are as follows
\begin{equation}
M=
\begin{bmatrix}
m & 0 & 0\\
0 & m & 0\\
0 & 0 & m
\end{bmatrix}
\quad  C=0_{3\times 3} \quad  G=
\begin{bmatrix}
0 \\ 0 \\ mg
\end{bmatrix}
\end{equation}
where $m$ is the mass of end-effector.

Since the proposed method is also applicable to redundantly actuated
parallel robots, the experiment is designed such that the method is
applied to redundant CDR. The Jacobian matrix may be rearranged into
the following form:
\begin{equation}
\label{31}
\begin{array}{c}
J^T=-
\begin{bmatrix}
x-x_{A1} & x-x_{A2} & x-x_{A3} & x-x_{A4} \\
y-y_{A1} & y-y_{A2} & y-y_{A3} & y-y_{A4} \\
z-z_{A1} & z-z_{A2} & z-z_{A3} & z-z_{A4}
\end{bmatrix}\\
\begin{bmatrix}
\frac{1}{l_1} & 0 & 0 & 0 \\
0 & \frac{1}{l_2} & 0 & 0 \\
0 & 0 & \frac{1}{l_3} & 0 \\
0 & 0 & 0 & \frac{1}{l_4}
\end{bmatrix}.
\end{array}
\end{equation}
%and in the fully parallel case, in which second actuator is omitted,
%it is represented by,
% \begin{equation}
% \label{31.5}
% J^T=-
% \begin{bmatrix}
% x-x_{A1} &  x-x_{A3} & x-x_{A4} \\
% y-y_{A1} &  y-y_{A3} & y-y_{A4} \\
% z-z_{A1} &  z-z_{A3} & z-z_{A4}
% \end{bmatrix}
% \begin{bmatrix}
% \frac{1}{l_1} & 0 & 0  \\
% 0 & \frac{1}{l_3} & 0  \\
% 0 & 0 & \frac{1}{l_4}  \\
% \end{bmatrix}.
% \end{equation}
Thus, Jacobian matrix is expressed in the form $J^T= J^T_{new}(X)L^{-1}$.
%Hence,
%the new control variable may be reformulated as:
%\begin{equation*}
%%\label{32}
%\tau=-\tau',
%\end{equation*}
%in which, $\tau'$ is equal to (\ref{13}).
Now it is possible to express $J_{new}^T$ in regressor form
\begin{equation}
\begin{array}{c}
\label{35.5}
J_{new}^T=
\begin{bmatrix}
x & x & x & x \\
y & y & y & y \\
z & z & z & z
\end{bmatrix}
+
\begin{bmatrix}
-1 & 0 & 0 \\
0 & -1 & 0 \\
0 & 0 & -1
\end{bmatrix}\times\\
\begin{bmatrix}
x_{A1} & x_{A2} & x_{A3} & x_{A4} \\
y_{A1} & y_{A2} & y_{A3} & y_{A4} \\
z_{A1} & z_{A2} & z_{A3} & z_{A4}
\end{bmatrix}
=
\begin{bmatrix}
x & x & x & x \\
y & y & y & y\\
z & z & z & z
\end{bmatrix}

+Y\Theta.
\end{array}
\end{equation}
%which is related to redundant case and
%\begin{equation}
%\begin{array}{c}
%\label{35}
%\resizebox{.95\hsize}{!}{$
%J_{new}^T=
%\begin{bmatrix}
%x & x & x \\
%y & y & y \\
%z & z & z
%\end{bmatrix}
%+
%\begin{bmatrix}
%-1 & 0 & 0 \\
%0 & -1 & 0 \\
%0 & 0 & -1
%\end{bmatrix}
%\begin{bmatrix}
%x_{A1} & x_{A3} & x_{A4} \\
%y_{A1} & y_{A3} & y_{A4} \\
%z_{A1} & z_{A3} & z_{A4}
%\end{bmatrix}$}
%\\=
%\begin{bmatrix}
%x & x & x \\
%y & y & y \\
%z & z & z
%\end{bmatrix}
%+Y\Theta
%\end{array}
%\end{equation}
%for the fully parallel case.
$Y_a\theta_a,Y_b\theta_b$ and $Y_c\theta_c$ are determined in Appendix.
% Construction of $Y_F$ and
%$\theta_F$ is trivial by the procedure proposed in section
%\ref{s22}.

%\textcolor[rgb]{0.98,0.00,0.00}{Professor, we have applied both method in experiment with projection algorithm. But because i want to show some results with and without projection, thus a simulation and an experiment is proposed here. Indeed, when we applied a adaptive control just for dynamic parameters without projection, the robot had a poor transient response which led to we stop the test. Hence, I think in the suspended cable robots with large work space, adaptive+projection is necessary. Anyway, I do't know we should explain this detail or not.}
%\textcolor[rgb]{1.00,0.00,0.00}{Reza, This does not need to be
%further explained, but something which needed to be explained in
%here is that why only passivity based approach is implemented and
%not the first proposed method!!!}

% In order to test performance of the presented methods, a simulation is considered for adaptive dynamic feedback proposed in section \ref{s21} and experiment for adaptive PBC proposed in section \ref{s22}.
% A non-adaptive sliding mode is also considered to compare the results.
%\textcolor[rgb]{1.00,0.00,0.00}{In spite of fewer number of adaptation laws in static feedback method, it is shown that passivity based methods are in class of robust controllers (see e.g. Ch.2 of \cite{ortega2013passivity}), thus we prefer to implement this method. }
\begin{figure}[t!]
    \centering
    \subfigure[Various components of ARAS suspended cable robot]{
        \includegraphics[width=.85\linewidth,height=5cm]{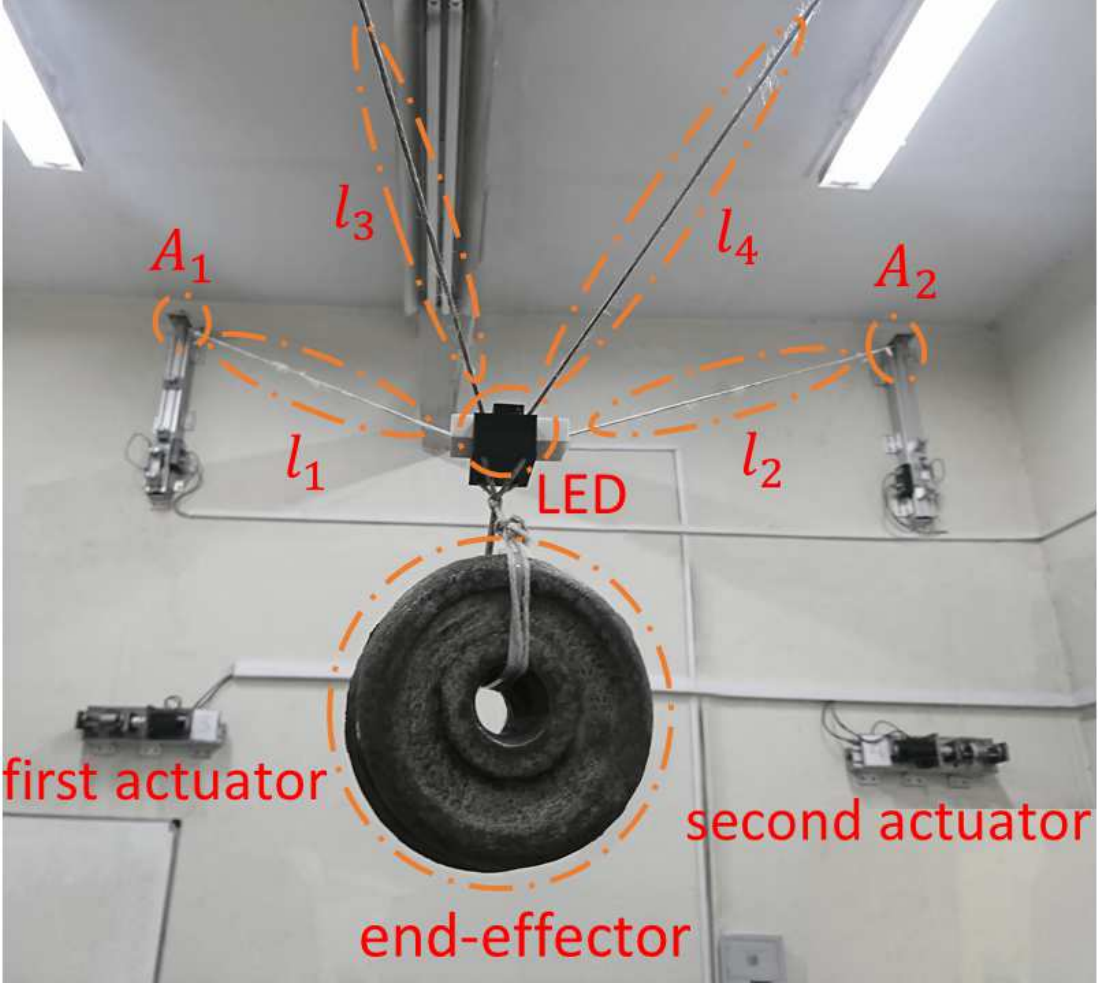}
    }
    \subfigure[The prototype of a stereo vision system, which is attached to the ceiling of laboratory in the geometrical center of the top of the workspace.]{
       \includegraphics[width=.85\linewidth,height=3cm]{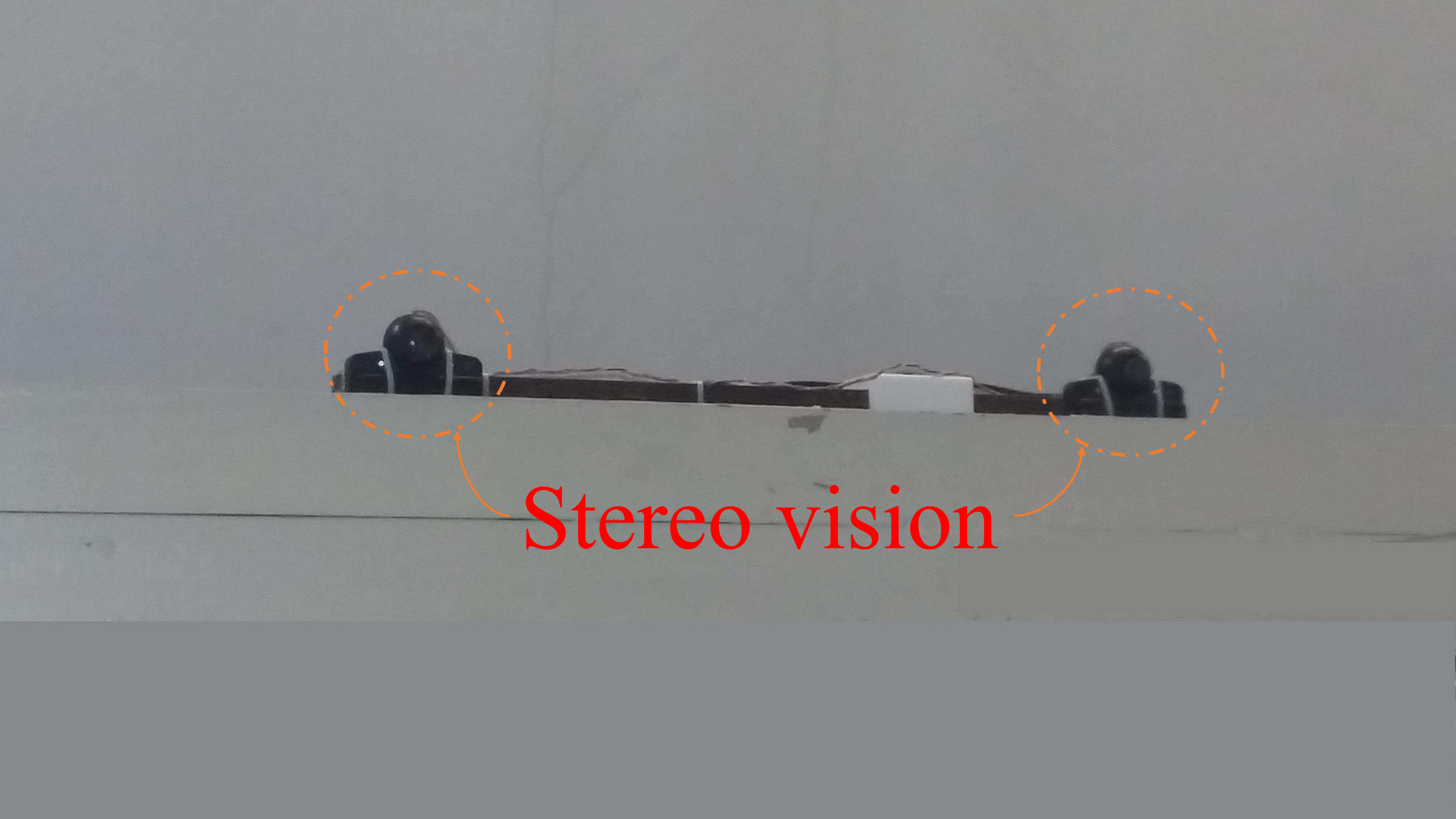}
    }
    \caption{ARAS suspended cable driven robot}
    \label{p3}
\end{figure}
%Experiment on the 3--DOF ARAS robot is considered for the proposed
%adaptive PBC in section \ref{s22}. Since the transient motion of the
%robot might approach singular configurations leading to undesired
%motion saturated outputs as a projection algorithm is used to avoid
%damages to the robot. Furthermore, because of the inherent
%robustness of passivity based method, more reliable outcomes is
%expected in practice in such cases as in this paper where simple
%models are used. With a similar approach, a saturated output version
%of the first proposed controller could be implemented on our robot,
%However, in order to verify the performance of dynamic feedback
%method without any projection and to compare the closed loop
%performance to that of the non-adaptive sliding mode controller,
%just simulations are also explored. By this means, the efficiency of
%the proposed methods without projection may be also analyzed in
%details.
\begin{figure*}[th]
    \centering
%   {\includegraphics[width=7cm]{fig3.png}}
    \subfigure[
%    Desired path of the robot (solid--blue), the path of the
%    robot with adaptive passivity based controller on the redundant robot (dashed--red),
%    with adaptive dynamic feedback controller on the fully parallel robot (dash-dotted--black) and with the non-adaptive controller on the redundant robot
%    (solid--pink) and fully parallel robot (dotted--sky blue).
Path of the robot with adaptive and non-adaptive controllers.
     In contrast to non-adaptive response which is based on calibrated model, the path with adaptive controller is almost matched with the
    desired path after a transient response.]{
        \includegraphics[width=.45\linewidth,height=5.3cm]{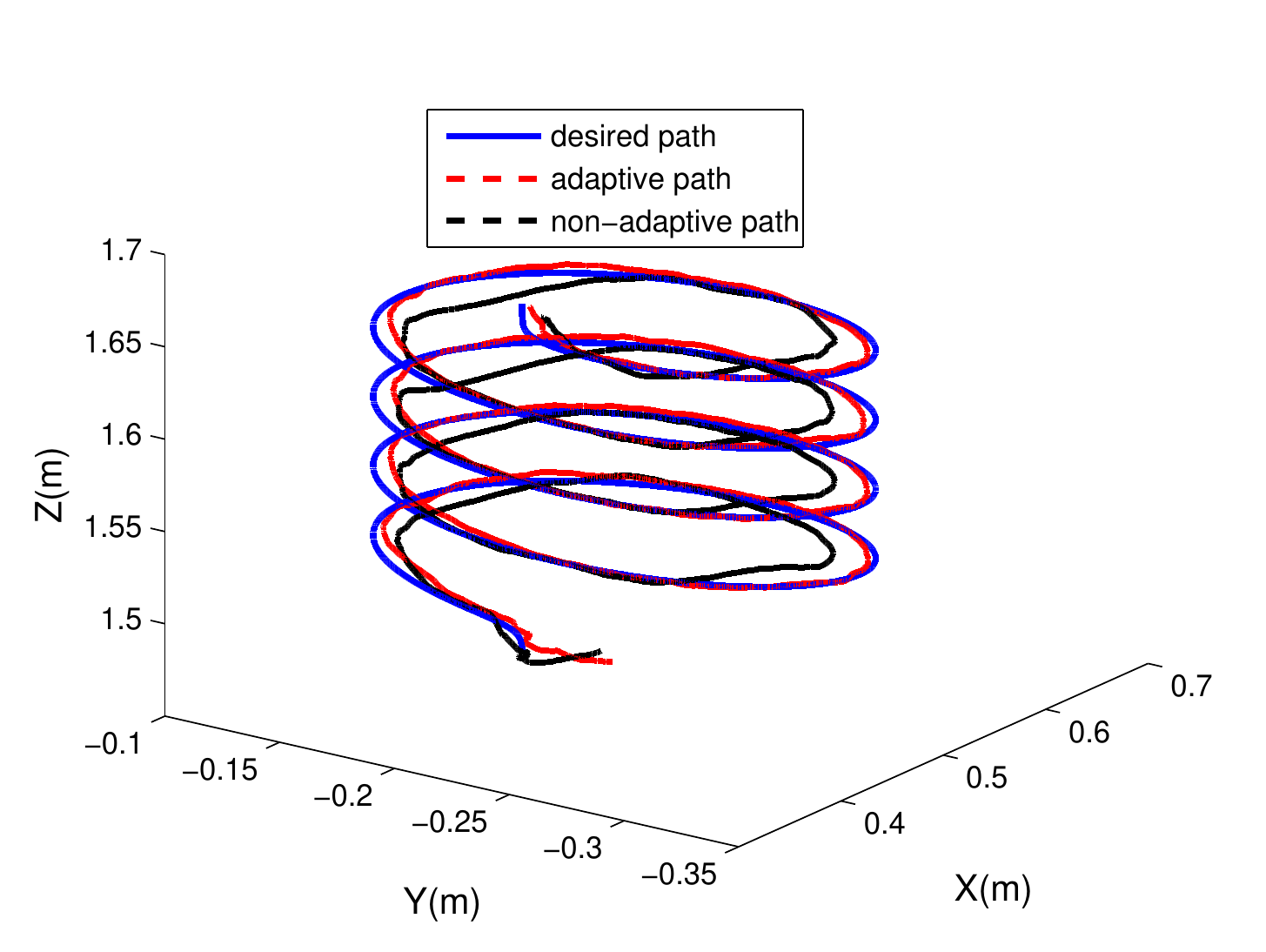}
        \label{p21}
    }
    \hspace*{1mm}
    \subfigure[
%    Tracking error of $x$ with adaptive passivity (solid--blue),
%    adaptive dynamic feedback (dashed--red),
%    non-adaptive controller on redundant robot (dash-dotted--black) and fully parallel robot (dotted-- pink) in centimeter.
Tracking error of $x$ in centimeter.
    Maximum remaining error with adaptive controller
    is less than 0.5cm which shows superiority of proposed
    method compared to the method based on calibration.]{
        \includegraphics[width=.45\linewidth,height=5.3cm]{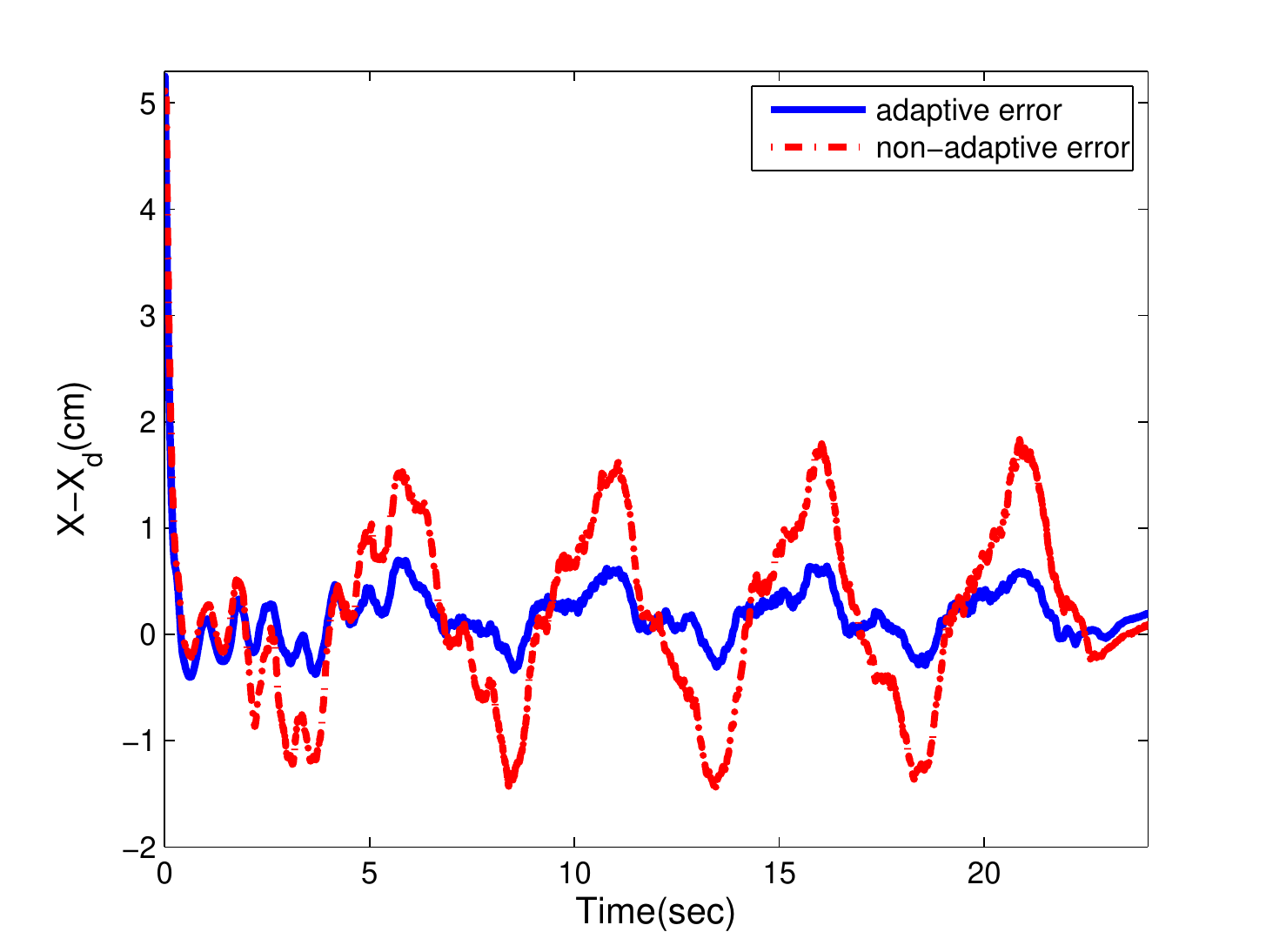}
        \label{p22}
    }
    \subfigure[
%    Tracking error of $y$ with adaptive passivity (solid--blue),
%    adaptive dynamic feedback (dashed--red),
%    non-adaptive controller on redundant robot (dash-dotted--black) and fully parallel robot (dotted-- pink) in centimeter.
Tracking error of $y$ in centimeter.
    Maximum remaining error with adaptive controller
    is about 0.5cm which shows superiority of proposed
    method compared to the method based on calibration.  ]{
        \includegraphics[width=.45\linewidth,height=5.3cm]{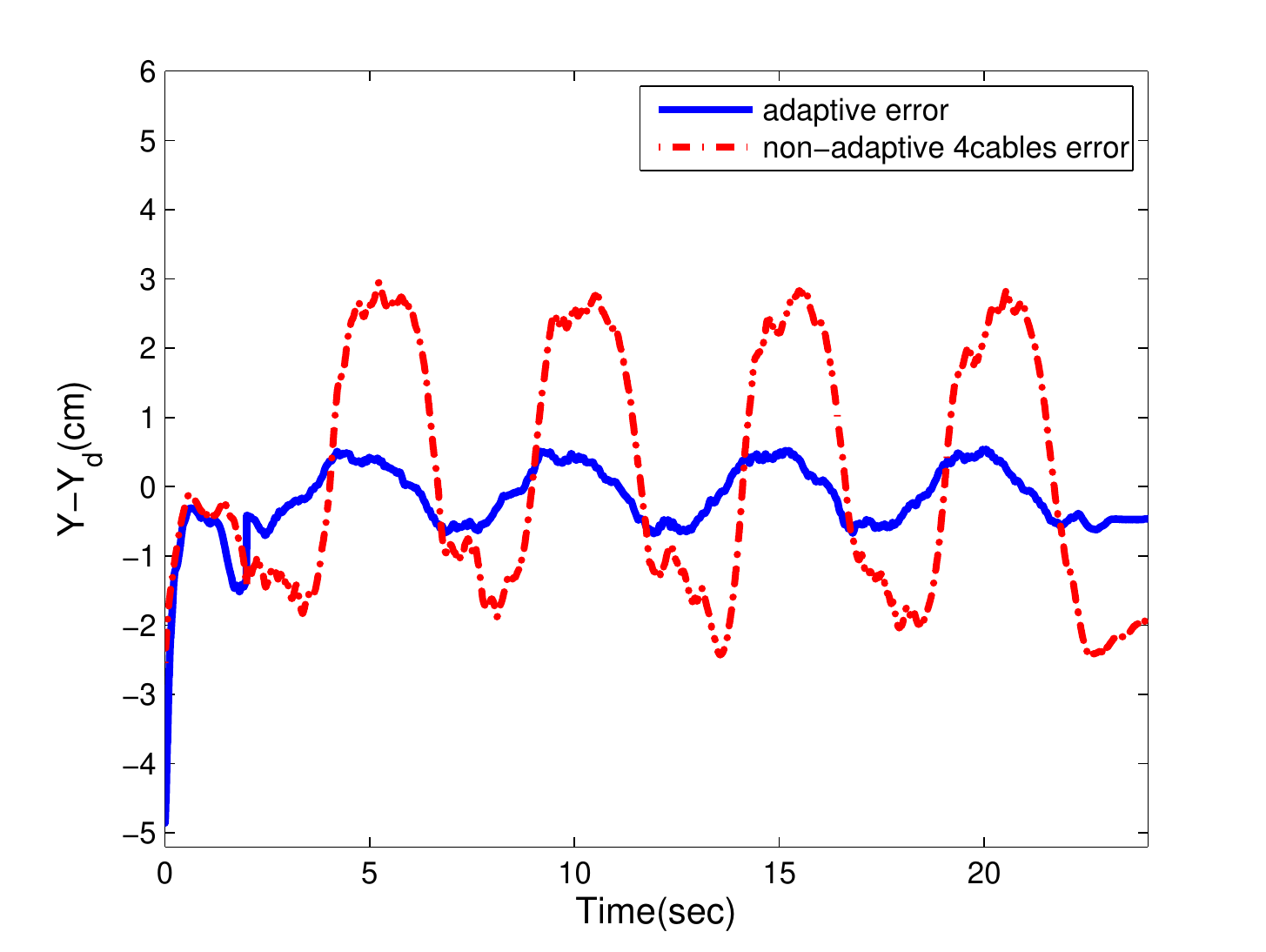}
        \label{p23}
    }
    \hspace*{5mm}
    \subfigure[
%    Tracking error of $z$ with adaptive passivity (solid--blue),
%    adaptive dynamic feedback (dashed--red),
%    non-adaptive controller on redundant robot (dash-dotted--black) and fully parallel robot (dotted-- pink) in centimeter.
Tracking error of $z$ in centimeter.
    Maximum remaining error with adaptive controller
    is about 0.25cm which shows great response of proposed
    methods compared to the method based on calibration.  ]{
        \includegraphics[width=.45\linewidth,height=5.3cm]{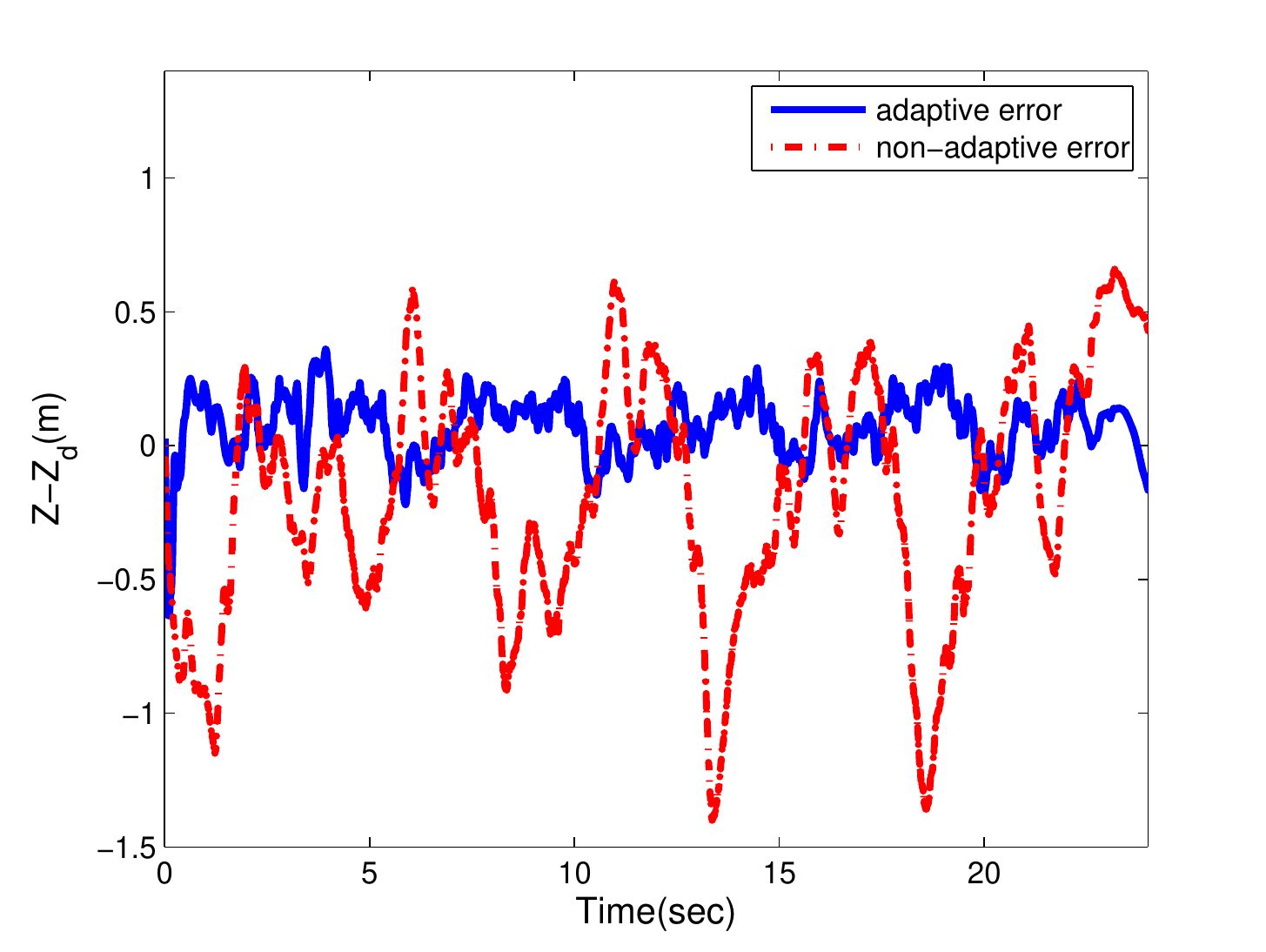}
        \label{p24}
    }
    \caption{Experiment results of the adaptive and non-adaptive controllers on a 3--DOF CDR. Tracking errors with proposed controller is smaller than the non-adaptive controller based on calibrated model. }
    \label{p2}
\end{figure*}
%\begin{figure*}[bt]
%   \centering
%       \hspace*{-2mm}
%   \subfigure[Control efforts of adaptive (solid) and non-adaptive (dotted) methods on a fully parallel 3--DOF CDR. ]{
%       \includegraphics[width=.45\linewidth]{tau3.eps}
%       \label{p41}
%   }
%   \hspace*{1mm}
%   \subfigure[Control efforts of adaptive (solid) and non-adaptive (dotted) methods on a redundant 3--DOF CDR.   ]{
%       \includegraphics[width=.45\linewidth]{tau4.eps}
%       \label{p42}
%   }
%   \caption{Control efforts of adaptive and non-adaptive controllers in the experiment. All of the control signals are positive.  The reason for oscillation
%       at the initial moments are fluctuations in cables. The control laws with proposed methods have more small fluctuations due to adaptation laws.}
%   \label{p4}
%
%\end{figure*}
\begin{figure}[h]
	\centering
	\includegraphics[width=.99\linewidth,height=5.7cm]{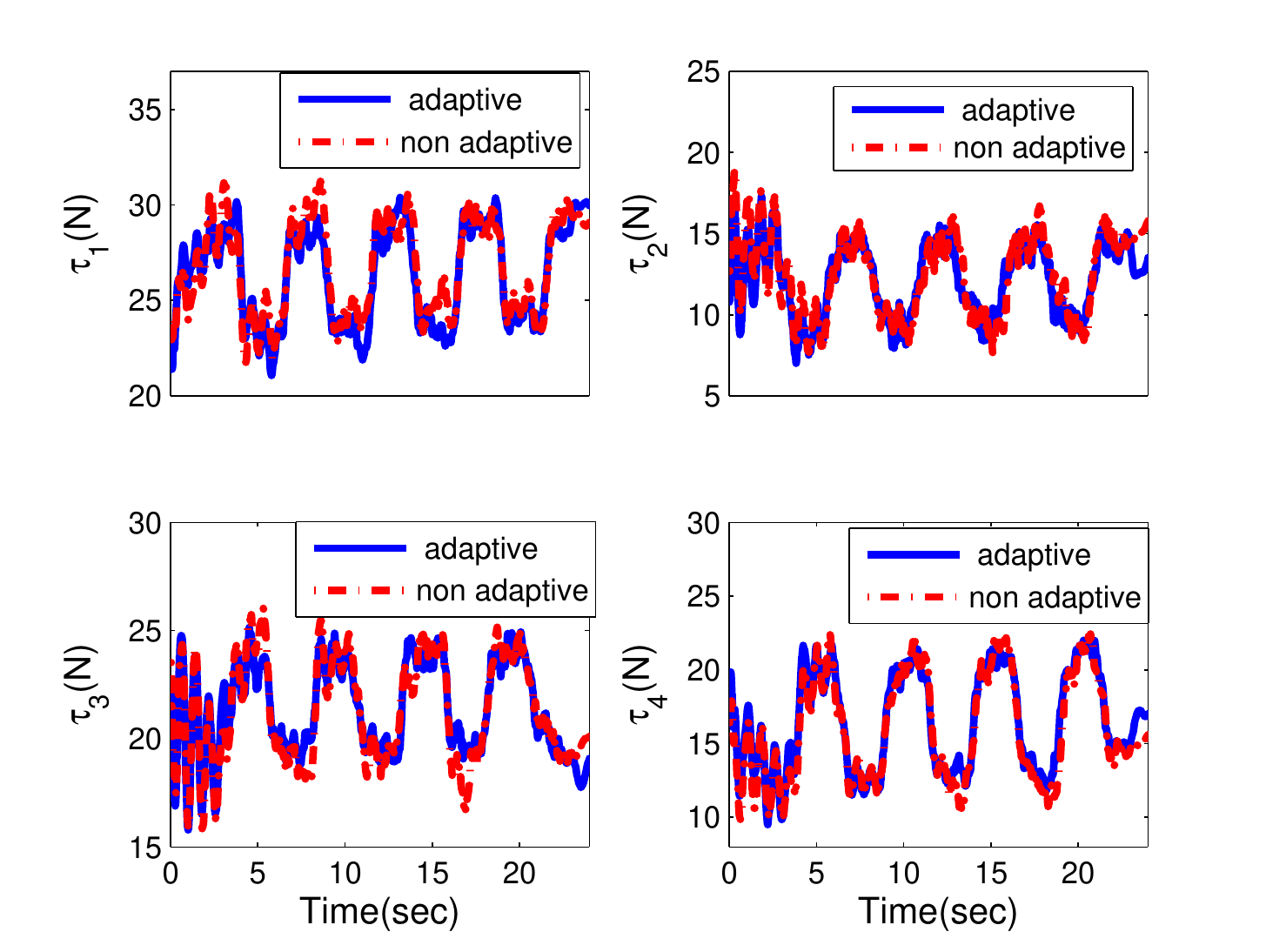}
	\caption{Control efforts of adaptive and non-adaptive controllers in the experiment.  The reason for oscillation
		at the initial moments are fluctuations in cables. The control laws with proposed method have more small fluctuations due to adaptation laws.}
	\label{p4}
\end{figure}
    \begin{figure*}[h!]
	\begin{align}
	&\prescript{}{1}{Y'}_{c}=[x^2\ddot{X}_{r1}/l_1^2+xy\ddot{X}_{r2}/l_1^2+\ddot{X}_{r1}x^2/l_2^2+xy\ddot{X}_{r2}/l_2^2,+y^2\ddot{X}_{r1}/l_1^4
	-xy\ddot{X}_{r2}/l_1^4+y^2\ddot{X}_{r1}/l_2^4-xy\ddot{X}_{r2}/l_2^4-2xy^2\dot{X}_{r1}/l_1^6\nonumber\\&\resizebox{0.99\hsize}{!}{$-2y^3\dot{X}_{r1}/l_1^6+2x^2y\dot{X}_{r2}/l_1^6
		+2xy^2\dot{X}_{r2}/l_1^6-2xy^2\dot{X}_{r1}/l_2^6-2y^3\dot{X}_{r1}/l_2^6+2x^2y\dot{X}_{r2}/l_2^6+2xy^2\dot{X}_{r2}/l_2^6,
		2y^2\ddot{X}_{r1}/l_1^4
		-2xy\ddot{X}_{r2}/l_1^4$}\nonumber\\&\resizebox{0.99\hsize}{!}{$+2y^2\ddot{X}_{r1}/l_2^4-2xy\ddot{X}_{r2}/l_2^4
		-4xy^2\dot{X}_{r1}/l_1^6-4y^3\dot{X}_{r1}/l_1^6+4x^2y\dot{X}_{r2}/l_1^6
		+4xy^2\dot{X}_{r2}/l_1^6-4xy^2\dot{X}_{r1}/l_2^6-4y^3\dot{X}_{r1}/l_2^6+4x^2y\dot{X}_{r2}/l_2^6$}\nonumber\\&\resizebox{0.99\hsize}{!}{$+4xy^2\dot{X}_{r2}/l_2^6,
		2xy^2\dot{X}_{r1}/l_1^5+2y^3\dot{X}_{r1}/l_1^5-2x^2y\dot{X}_{r2}/l_1^5-2xy^2\dot{X}_{r2}/l_1^5+2xy^2\dot{X}_{r1}/l_2^5+2y^3\dot{X}_{r1}/l_2^5
		-2x^2y\dot{X}_{r2}/l_2^5-2xy^2\dot{X}_{r2}/l_2^5$}\nonumber\\&+xy^2\dot{x}\dot{X}_{r1}/l_1^5-x^2y\dot{y}\dot{X}_{r1}/l_1^5-x^2y\dot{x}\dot{X}_{r2}/l_1^5
	+x^3\dot{y}\dot{X}_{r2}/l_1^5+xy^2\dot{x}\dot{X}_{r1}/l_2^5-x^2y\dot{y}\dot{X}_{r1}/l_2^5-x^2y\dot{x}\dot{X}_{r2}/l_2^5+x^3\dot{y}\dot{X}_{r2}/l_2^5,\ddot{X}_{r1},\nonumber\\
	&-2x\ddot{X}_{r1}/l_2^2-y\ddot{X}_{r2}/l_2^2,+y\ddot{X}_{r2}/l_2^4+2y^2\dot{X}_{r1}/l_2^6-4xy\dot{X}_{r2}/l_2^6,
	+2y\ddot{X}_{r2}/l_2^4+4y^2\dot{X}_{r1}/l_2^6-8xy\dot{X}_{r2}/l_2^6,-2y^2\dot{X}_{r1}/l_2^5\nonumber\\
	&+4xy\dot{X}_{r2}/l_2^5-y^2\dot{x}\dot{X}_{r1}/l_2^5+2xy\dot{y}\dot{X}_{r1}/l_2^5+2xy\dot{x}\dot{X}_{r2}/l_2^5+3x^2\dot{y}\dot{X}_{r2}/l_2^5,\ddot{X}_{r1}/l_2^2,2y\dot{X}_{r2}/l_2^6,4y\dot{X}_{r2}/l_2^6,
	-2y\dot{X}_{r2}/l_2^5\nonumber\\&-y\dot{y}\dot{X}_{r1}/l_2^5-y\dot{x}\dot{X}_{r2}/l_2^5+3x\dot{y}\dot{X}_{r2}/l_2^5,0,0,0,-\dot{y}\dot{X}_{r2}/l_2^5], \nonumber\\
	&\prescript{}{2}{Y'}_{c}=[xy\ddot{X}_{r1}/l_1^2+y^2\ddot{X}_{r2}/l_1^2
	+xy\ddot{X}_{r1}/l_2^2+y^2\ddot{X}_{r2}/l_2^2+gx^2/l_1^2+gy^2/l_1^2+gx^2/l_2^2
	+gy^2/l_2^2,-xy\ddot{X}_{r1}/l_1^4
	+x^2\ddot{X}_{r2}/l_1^4\nonumber\\&\resizebox{0.99\hsize}{!}{$-xy\ddot{X}_{r1}/l_2^4+x^2\ddot{X}_{r2}/l_2^4+2x^2y\dot{X}_{r1}/l_1^6+2xy^2\dot{X}_{r1}/l_1^6
		-2x^3\dot{X}_{r2}/l_1^6-2x^2y\dot{X}_{r2}/l_1^6+2x^2y\dot{X}_{r1}/l_2^6+2xy^2\dot{X}_{r1}/l_2^6-2x^3\dot{X}_{r2}/l_2^6$}\nonumber\\&\resizebox{0.99\hsize}{!}{$-2x^2y\dot{X}_{r2}/l_2^6,
		-2xy\ddot{X}_{r1}/l_1^4
		+2x^2\ddot{X}_{r2}/l_1^4-2xy\ddot{X}_{r1}/l_2^4+2x^2\ddot{X}_{r2}/l_2^4+4x^2y\dot{X}_{r1}/l_1^6+4xy^2\dot{X}_{r1}/l_1^6
		-4x^3\dot{X}_{r2}/l_1^6
		-4x^2y\dot{X}_{r2}/l_1^6$}\nonumber\\&+4x^2y\dot{X}_{r1}/l_2^6+4xy^2\dot{X}_{r1}/l_2^6-4x^3\dot{X}_{r2}/l_2^6-4x^2y\dot{X}_{r2}/l_2^6,
	-2x^2y\dot{X}_{r1}/l_1^5-2xy^2\dot{X}_{r1}/l_1^5+2x^3\dot{X}_{r2}/l_1^5+2x^2y\dot{X}_{r2}/l_1^5\nonumber\\&-2x^2y\dot{X}_{r1}/l_2^5
	-2xy^2\dot{X}_{r1}/l_2^5+2x^3\dot{X}_{r2}/l_2^5+2x^2y\dot{X}_{r2}/l_2^5+y^3\dot{x}\dot{X}_{r1}/l_1^5-xy^2\dot{y}\dot{X}_{r1}/l_1^5-xy^2\dot{x}\dot{X}_{r2}/l_1^5+x^2y\dot{y}\dot{X}_{r2}/l_1^5\nonumber\\&+y^3\dot{x}\dot{X}_{r1}/l_2^5
	-x^2y\dot{y}\dot{X}_{r1}/l_2^5-xy^2\dot{x}\dot{X}_{r2}/l_2^5-x^2y\dot{y}\dot{X}_{r2}/l_2^5,\ddot{X}_{r2}+g,
	-y\ddot{X}_{1}/l_2^2-g2x/l_2^2,y\ddot{X}_{1}/l_2^4-2x\ddot{X}_{2}/l_2^4-4xy\dot{X}_{1}/l_2^6\nonumber\\&-2y^2\dot{X}_{1}/l_2^6-6x^2\dot{X}_{2}/l_2^6+4xy\dot{X}_{2}/l_2^6,
	2y\ddot{X}_{1}/l_2^4-4x\ddot{X}_{2}/l_2^4-8xy\dot{X}_{1}/l_2^6-4y^2\dot{X}_{1}/l_2^6-12x^2\dot{X}_{2}/l_2^6+8xy\dot{X}_{2}/l_2^6,4xy\dot{X}_{r1}/l_2^5\nonumber\\&
	+2y^2\dot{X}_{r1}/l_2^5+6x^2\dot{X}_{r2}/l_2^5-4xy\dot{X}_{r2}/l_2^5+y^2\dot{y}\dot{X}_{r1}/l_2^5+y^2\dot{x}\dot{X}_{r2}/l_2^5+2xy\dot{y}\dot{X}_{r2}/l_2^5,
	g/l_2^2,\ddot{X}_{r2}/l_2^4+2y\dot{X}_{r1}/l_2^6+6x\dot{X}_{r2}/l_2^6,\nonumber\\&2g/l_2^2,2\ddot{X}_{r2}/l_2^4+4y\dot{X}_{r1}/l_2^6+12x\dot{X}_{r2}/l_2^6,
	-2y\dot{X}_{r1}/l_2^5-6x\dot{X}_{r2}/l_2^5+2y\dot{X}_{r2}/l_2^5-y\dot{y}\dot{X}_{r2}/l_2^5,0,2\dot{X}_{r2}/l_2^5,4\dot{X}_{r2}/l_2^5,2\dot{X}_{r2}/l_2^5].\label{ap1}\\
	&\hspace{8.5cm}\mathclap{\rule{17cm}{0.4pt}}\nonumber
	\end{align}
\end{figure*}
In order to measure the length of cables, the motor rotation angles
are measured by incremental encoders. Hence, the current length of
cables are available by knowing initial length of the them.  A 100
frame per second stereo vision camera with  $640\times 480$
resolution is utilized to measure position of the LED lamp as the
position of the end-effector. More information about the
experimental setup is given in \cite{khalilpour2019robust}.
 Fig.~\ref{p3} shows different
parts of  ARAS cable driven suspended robot.

The mass of end-effector is equal to 4.5KG and coordinates of cable
anchor points are obtained by calibration as:
\begin{equation}
\label{37.5}
\begin{split}
&x_{A1}=-x_{A2}=x_{A3}=-x_{A4}=\frac{b}{2}=\frac{3.56}{2} \\
&y_{A1}=y_{A2}=-y_{A3}=-y_{A4}=\frac{a}{2}=\frac{7.05}{2} \\
&z_{A1}=z_{A2}=z_{A3}=z_{A4}=h=4.26
\end{split}
\end{equation}
The spring-like desired trajectory is expressed in SI
unit systems, as follows:
 \begin{equation}
\label{37}
\begin{cases}
    x_d(t)=0.48-0.1\cos(\frac{2\pi}{5}t)       \\
    y_d(t)=-0.22+0.1\sin(\frac{2\pi}{5}t) \\
    z_d(t)=1.5+0.0075t
  \end{cases}
  \end{equation}
The center and diameter of the trajectory are chosen in such a way
that the robot is inside its workspace away from its singular
points, and well-measured by the stereo camera.
%The parameters of  the adaptive dynamic feedback
%controller applied to three cables case are set to:
%\begin{equation*}
%\begin{array}{c}
%K=10 I,\quad\Gamma=20I, \\
%A=25 I,\quad C=20 I,\quad D=15 I,\quad F=20 I.
%\end{array}
%\end{equation*}
% Initial values for dynamic matrices $B,E$ are set equal
%to zero.
The adaptive passivity based method parameters which is applied to
redundant case are set to:
\begin{equation*}
\Gamma=20 I,\qquad K=10 I, \qquad \Lambda=5 I.
\end{equation*}
The initial position of the robot is:
\begin{equation*}
\begin{bmatrix}
x_0 \\ y_0 \\ z_0
\end{bmatrix}
=
\begin{bmatrix}
0.43 \\ -0.28 \\ 1.5
\end{bmatrix}.
\end{equation*}
Notice that in contrast to all previous works on ARAS CDR, in this
work the initial position of the robot is not on the trajectory,
i.e. $\tilde{X}$ is not zero at $t=0$. Note that such sudden motion
request in cable driven robots may lead to longitudinal and
transverse oscillations in cables which may cause instabilities in
the robot. This extreme scenario is tested on the robot with
suitable controller performance.

The upper bound of perturbation for dynamic and kinematic parameters
is set to $10\%$. In order to examine the effect of the projection
algorithm, a saturation function is used as a simple appropriate
projection for the case of passivity based method. By this means,
estimated parameters are saturated within the $\pm 15\%$ bounds. For
the sake of comparison, and in order to analyze the performance of
proposed methods, a non-adaptive controller is also
implemented in practice. The control law is as what given in
(\ref{10}) with the parameters obtained from calibration. This is
considered, since a calibrated model is not match exactly with
nominal model of a robot. Note that a high gain 
controller was also implemented on the robot, whose results are not
reported in this paper, since it led to instability due to the high
oscillations in cables.

The experimental results are illustrated in Fig.~\ref{p2} and
Fig.~\ref{p4}. Performance of the controllers are depicted in
Fig.~\ref{p2}. As it is seen in Fig.~\ref{p21}, the traversed path
with adaptive controller suitably tracks the desired path with a
short transient error. However, non-adaptive controller is not that
precise, and leads to an apparent error throughout the path. In
order to compare the results more clearly, the tracking errors are
shown in Fig.~\ref{p22}, \ref{p23} and \ref{p24}.

Note that the tracking error illustrated in these figures are in
centimeters. The results show the desirable  performance of the
proposed method in comparison to non-adaptive controller based on
the calibrated model. The response of the system is affected by the
oscillations of cables at initial transient due to an initial error
between the trajectory and position of the robot. After this period,
fluctuations are suitably damped and thus, the robot has almost a
repetitive response.

In Fig.~\ref{p22}, the tracking error in $X$ direction is less than
$0.5$cm with adaptive controller while with the non-adaptive
controller, it is about 2cm. The tracking errors in $Y$ and $Z$
directions are about 0.5cm and 0.25cm for proposed method and 3cm
and 1.5cm with the non-adaptive controller, respectively. This shows
superiority of the proposed methods compared to that of current
available method, since the response is improved and bounds of the
errors are decreased. Notice that the reason why error in $Y$
direction is almost double of that in $X$ direction is the distance
between anchor points proposed in (\ref{37.5}). Recall that in the
case of non-adaptive controller, the kinematic and dynamic
parameters are obtained based on a time consuming calibration.
Indeed, if the parameters were unknown, a worse response or even
instability, would be happen. Note that the non-vanishing error may
be caused by a simple dynamic model assigned to the robot and
dynamics of the actuators.

Fig.~\ref{p4} shows control efforts for adaptive and non-adaptive
controllers in experiments. As it is seen in this figure, some
oscillations are observed at the initial moments. The main reason
for such oscillations are the oscillations caused in the cables,
because of its elasticity, while the reason why control laws with
proposed method have smaller oscillations is the adaptation law.
Note that all control signals are positive, since as explained in
Remark~\ref{r1}, the desired trajectory is within the feasible
workspace of the robot as well as using the projection algorithm, it
is ensured that adapted parameters can not exceed from a specified
bound. Finally as it is depicted in this  figure, the control
efforts needed in the proposed adaptive controller are almost
similar to that of non-adaptive controller, despite their suitable
tracking performance.
    \begin{figure*}[t]
	\begin{align}
	&\prescript{}{1}{Y}_{a}=[-2z^2\ddot{X}_{r1}-2xz(\ddot{X}_{r3}+g),2z^2\big(\prescript{}{1}{K}S\big)+2xz\big(\prescript{}{3}{K}S\big),
	4z\ddot{X}_{r1}+2x(\ddot{X}_{r3}+g),-4z\big(\prescript{}{1}{K}S\big)-2x\big(\prescript{}{3}{K}S\big),-2\ddot{X}_{r1},\nonumber
	\\&2\resizebox{0.99\hsize}{!}{$\big(\prescript{}{1}{K}S\big),-2z^2\ddot{X}_{r2}+yz(\ddot{X}_{r3}+g),2z^2\big(\prescript{}{2}{K}S\big)-yz\big(\prescript{}{3}{K}S\big),4z\ddot{X}_{r2}
		-y(\ddot{X}_{r3}+g),-4z\big(\prescript{}{2}{K}S\big)+y\big(\prescript{}{3}{K}S\big)-2\ddot{X}_{r2},2\big(\prescript{}{2}{K}S\big),$}\nonumber\\&\resizebox{0.99\hsize}{!}{$(4x^2z+4z^3)(\ddot{X}_{r3}+g),-(4x^2z+4z^3)\big(\prescript{}{3}{K}S\big),-(4x^2+12z^2)(\ddot{X}_{r3}+g),(4x^2+12z^2)\big(\prescript{}{3}{K}S\big),2y^2z(\ddot{X}_{r3}+g),-2y^2z\big(\prescript{}{3}{K}S\big),$}\nonumber\\&x-2y^2(\ddot{X}_{r3}+g),2y^2\big(\prescript{}{3}{K}S\big),12z(\ddot{X}_{r3}+g),-12z\big(\prescript{}{3}{K}S\big),-4(\ddot{X}_{r3}+g),4\big(\prescript{}{3}{K}S\big)],\nonumber\\
	&\prescript{}{2}{Y}_{a}=[2z^2\ddot{X}_{r1}+2xz(\ddot{X}_{r3}+g),-2z^2\big(\prescript{}{1}{K}S\big)-2xz\big(\prescript{}{3}{K}S\big),v4z\big(\prescript{}{1}{K}S\big)+2x\big(\prescript{}{3}{K}S\big),-4z\ddot{X}_{r1}-2x(\ddot{X}_{r3}+g),2\ddot{X}_{r1},\nonumber\\&
	\resizebox{0.99\hsize}{!}{$-2\big(\prescript{}{1}{K}S\big),-2z^2\ddot{X}_{r2}+yz(\ddot{X}_{r3}+g),2z^2\big(\prescript{}{2}{K}S\big)-yz\big(\prescript{}{3}{K}S\big),4z\ddot{X}_{r2}-y(\ddot{X}_{r3}+g)-4z\big(\prescript{}{2}{K}S\big)+y\big(\prescript{}{3}{K}S\big)-2\ddot{X}_{r2},2\big(\prescript{}{2}{K}S\big),$}\nonumber\\&\resizebox{0.99\hsize}{!}{$(4x^2z+4z^3)(\ddot{X}_{r3}+g),-(4x^2z+4z^3)\big(\prescript{}{3}{K}S\big),-(4x^2+12z^2)(\ddot{X}_{r3}+g),(4x^2+12z^2)\big(\prescript{}{3}{K}S\big),2y^2z(\ddot{X}_{r3}+g),-2y^2z\big(\prescript{}{3}{K}S\big),$}\nonumber\\&-2y^2(\ddot{X}_{r3}+g),2y^2\big(\prescript{}{3}{K}S\big),12z(\ddot{X}_{r3}+g),-12z\big(\prescript{}{3}{K}S\big),-4(\ddot{X}_{r3}+g),4\big(\prescript{}{3}{K}S\big)],\nonumber \\
	&\resizebox{0.99\hsize}{!}{$\prescript{}{3}{Y}_{a}=[-2z^2\ddot{X}_{r1}-2xz(\ddot{X}_{r3}+g),2z^2\big(\prescript{}{1}{K}S\big)+2xz\big(\prescript{}{3}{K}S\big),-4z\big(\prescript{}{1}{K}S\big)-2x\big(\prescript{}{3}{K}S\big),-2\ddot{X}_{r1},
		4z\ddot{X}_{r1}+2x(\ddot{X}_{r3}+g),\big(\prescript{}{1}{K}S\big),$}\nonumber\\&22z^2\ddot{X}_{r2}+yz(\ddot{X}_{r3}+g),-2z^2\big(\prescript{}{2}{K}S\big)+yz\big(\prescript{}{3}{K}S\big),-4z\ddot{X}_{r2}+y(\ddot{X}_{r3}+g),4z\big(\prescript{}{2}{K}S\big)-y\big(\prescript{}{3}{K}S\big),2\ddot{X}_{r2},-2\big(\prescript{}{2}{K}S\big),(4x^2z\nonumber\\&+4z^3)(\ddot{X}_{r3}+g),-(4x^2z+4z^3)\big(\prescript{}{3}{K}S\big),-(4x^2+12z^2)(\ddot{X}_{r3}+g),(4x^2+12z^2)\big(\prescript{}{3}{K}S\big),2y^2z(\ddot{X}_{r3}+g),-2y^2z\big(\prescript{}{3}{K}S\big),\nonumber\\&-2y^2(\ddot{X}_{r3}+g),2y^2\big(\prescript{}{3}{K}S\big),12z(\ddot{X}_{r3}+g),-12z\big(\prescript{}{3}{K}S\big),-4(\ddot{X}_{r3}+g),4\big(\prescript{}{3}{K}S\big)],\nonumber\\
	&\prescript{}{4}{Y}_{a}=[2z^2\ddot{X}_{r1}+2xz(\ddot{X}_{r3}+g),-2z^2\big(\prescript{}{1}{K}S\big)-2xz\big(\prescript{}{3}{K}S\big),
	-4z\ddot{X}_{r1}-2x(\ddot{X}_{r3}+g),4z\big(\prescript{}{1}{K}S\big)+2x\big(\prescript{}{3}{K}S\big),2\ddot{X}_{r1},\nonumber\\&\resizebox{0.99\hsize}{!}{$-2\big(\prescript{}{1}{K}S\big),2z^2\ddot{X}_{r2}+yz(\ddot{X}_{r3}+g),-2z^2\big(\prescript{}{2}{K}S\big)+yz\big(\prescript{}{3}{K}S\big),-4z\ddot{X}_{r2}+y(\ddot{X}_{r3}+g),4z\big(\prescript{}{2}{K}S\big)-y\big(\prescript{}{3}{K}S\big),2\ddot{X}_{r2},-2\big(\prescript{}{2}{K}S\big),$}\nonumber\\&\resizebox{0.99\hsize}{!}{$(4x^2z+4z^3)(\ddot{X}_{r3}+g),-(4x^2z+4z^3)\big(\prescript{}{3}{K}S\big),-(4x^2+12z^2)(\ddot{X}_{r3}+g),(4x^2+12z^2)\big(\prescript{}{3}{K}S\big),2y^2z(\ddot{X}_{r3}+g),-2y^2z\big(\prescript{}{3}{K}S\big),$}\nonumber\\&-2y^2(\ddot{X}_{r3}+g),2y^2\big(\prescript{}{3}{K}S\big),12z(\ddot{X}_{r3}+g),-12z\big(\prescript{}{3}{K}S\big),-4(\ddot{X}_{r3}+g),4\big(\prescript{}{3}{K}S\big)].\label{ap2}\\
	&\hspace{8.5cm}\mathclap{\rule{17cm}{0.4pt}}\nonumber
	\end{align}
\end{figure*}
\section{Conclusions and Prospect Research}\label{s5}
This paper focused on the design of adaptive tracking controller for
parallel robots with dynamic and kinematic uncertainties. A novel
expression for inverse of Jacobian matrix in regressor form was
proposed, a methods based on passivity was introduced and adaptation
law for unknown parameters was elicited. By this means, it was
proved that the tracking error of the robot converges asymptotically
to zero in the presence of kinematics, Jacobian and dynamic
uncertainties. The performance of the controller was verified
through simulation and experiment, and it has been shown that in
comparison to available methods, the response is improved, while the
effect of projection in singularity avoidance was highlighted. Since
the research on the control of parallel robots in presence of
kinematic and dynamic uncertainties is developing, future research
may be devoted to decoupling the adaptation laws for kinematic and
dynamic parameters in order to reduce the number of adapting parameters. Extension of the proposed method to the case of serial robots is also underway.

\appendix
\section*{Regressor Forms of 2R\underline{P}R robot}%\label{ap2}
Jacobian matrix can be represented as:
\begin{align*}
J^T&=\begin{bmatrix}
\frac{x}{l_1} & \frac{x-a}{l_2} \\ \frac{y}{l_1}
& \frac{y}{l_2}
\end{bmatrix}=\begin{bmatrix}
x & x-a \\
y & y
\end{bmatrix}
\begin{bmatrix}
\frac{1}{l_1} & 0 \\
0 & \frac{1}{l_2}
\end{bmatrix}=J_{new}^TL^{-1}
\end{align*}
\begin{align*}
J_{new}^T=\begin{bmatrix}
x & x \\
y & y
\end{bmatrix}+\begin{bmatrix}
0 & -1 \\
0 & 0
\end{bmatrix}
a=\begin{bmatrix}
x & x \\
y & y
\end{bmatrix}+Y\Theta
\end{align*}
$$T=ay=Y_b\theta_b.$$
Dynamic parameters of the system can be expressed by a regressor as
follows 
$$Y_c\theta_c=-KS+Y'_C\theta'_c,$$
where $Y_c$ is proposed in (\ref{ap1}) and $\theta'_c$ is:
%\scriptstyle
%    	 \begingroup\makeatletter\def\f@size{8}\check@mathfonts
%%%%%
%%
$$\theta'_c=[m,I,mc^2,mc,m_p,ma,Ia,mc^2a,mca,ma^2,Ia^2,mc^2a^2,$$
$$mca^2,ma^3,Ia^3,mc^2a^3,mca^3]^T\in\mathbb{R}^{17}.$$
$Y_a\theta_a$ proposed in (\ref{14}) is obtained as follows
$$Y_a=[R_1Y_c\vdots0_{4\times2}]+[0_{4\times2}\vdots Y_c],\quad R_1:=\begin{bmatrix}
y & -x \\ -y & x
\end{bmatrix},$$
$$ \theta_a=[\theta_c,a^4m,a^4I,a^4mc^2,a^4mc]^T\in\mathbb{R}^{21}.$$

\section*{Regressor Forms of Redundant CDR}%\label{ap2}
 Consider, $a,b$ and $h$ as illustrated in Fig.~\ref{sh1}
and presented in (\ref{37.5}).
 Matrix $R$ proposed in (\ref{11}) is
\begin{equation*}%\label{app3}
\resizebox{.95\hsize}{!}{$
\begin{bmatrix}
-2a^2b(z-h)^2 & -2ab^2(z-h)^2 & \begin{array}{c} 2(2a^2x^2-a^2bx-\\2b^2y^2+ab^2y)(z-h)\\+4a^2(z-h)^3.\end{array} \\
2a^2b(z-h)^2 & -2ab^2(z-h)^2 & \begin{array}{c} 2(2a^2x^2+a^2bx-\\2b^2y^2+ab^2y)(z-h)\\+4a^2(z-h)^3.\end{array} \\
-2a^2b(z-h)^2 & 2ab^2(z-h)^2 & \begin{array}{c} 2(2a^2x^2-a^2bx-\\2b^2y^2-ab^2y)(z-h)\\+4a^2(z-h)^3.\end{array} \\
2a^2b(z-h)^2 & 2ab^2(z-h)^2 & \begin{array}{c} 2(2a^2x^2+a^2bx-\\2b^2y^2-ab^2y)(z-h)\\+4a^2(z-h)^3.\end{array}
\end{bmatrix}
$}
\end{equation*}
 $Y_a$ is proposed in (\ref{ap2}) and $\theta_a$ is given by
% \begin{equation*}
% \begin{array}{c}
% Y_a=
% \begin{bmatrix}
% \prescript{}{1}{Y}_{a} \\ \prescript{}{2}{Y}_{a} \\ \prescript{}{3}{Y}_{a} \\
% \prescript{}{4}{Y}_{a}
% \end{bmatrix}
% \end{array}
% \end{equation*}
\begin{equation*}
\begin{split}
&\theta_a=[a^2bm,a^2b,a^2bhm,a^2bh,a^2bh^2m,a^2bh^2,ab^2m,ab^2,\\&ab^2hm,ab^2h,ab^2h^2m,ab^2h^2,a^2m,a^2,a^2hm,a^2h,b^2m,b^2,\\&b^2hm,b^2h,a^2h^2m,a^2h^2,a^2h^3m,a^2h^3]^T.
\end{split}
\end{equation*}
The determinant $T=Y_b\theta_b$ is
\begin{equation*}
\begin{split}
&Y_b=[4z^2,-8z,4,-128(x^2yz^2+xy^2z^2),256(x^2yz^2\\&+xy^2z^2)+128xyz^2,-128(x^2yz^2+xy^2z^2),256xy],
\end{split}
\end{equation*}
\begin{equation*}
\theta_b=[a^2b^2,a^2b^2h,a^2b^2h^2,h,h^2,h^3,h^4]^T.
\end{equation*}
$Y_c\theta_c$ is given as
\begin{equation}
%\label{app2}
  \begin{array}{c}
M\ddot{X}_r+C\dot{X}_r+G-KS=-KS\\+\left(\ddot{X}_r+
\begin{bmatrix}
  0 \\ 0 \\ g
\end{bmatrix}
\right )m=-KS+Y_c\theta_c.
  \end{array}
\end{equation}
 Construction of $Y_F\theta_F$ is a bit lengthy but straightforward.

\bibliographystyle{ieeetr}
\bibliography{REF1}
\end{document}